\newcommand{\iprod}[1]{\langle #1 \rangle}
\newcommand{\llnorm}[1]{\left\lVert#1\right\rVert_2}
\newcommand{\abs}[1]{\left| #1 \right|}
\renewcommand{\epsilon}{\varepsilon}
\newcommand{\ball}[2]{\mathit{B}_{#2}\left( #1 \right)}
\newcommand{\ONE}{\mathbbm{1}}
\newcommand{\id}{\mathbb{I}}
\newcommand{\Lap}{\mathrm{Lap}}
\newcommand{\OPT}{\mathrm{OPT}}
\newcommand{\Score}{\textsc{Score}}
\newcommand{\Match}{\mathrm{Match}}
\newcommand{\Median}{\mathrm{Median}}
\newcommand{\Win}{\mathsf{Win}}
\newcommand{\Lose}{\mathsf{Lose}}
\newcommand{\Tie}{\mathsf{Tie}}
\newcommand{\PDPRE}{\ensuremath{\textsc{PDPRE}}}
\newcommand{\CDPRE}{\ensuremath{\textsc{CDPRE}}}
\newcommand{\ADPRE}{\ensuremath{\textsc{ADPRE}}}
\newcommand{\PDPODME}{\ensuremath{\textsc{PDPODME}}}
\newcommand{\CDPODME}{\ensuremath{\textsc{CDPODME}}}
\newcommand{\ADPODME}{\ensuremath{\textsc{ADPODME}}}
\newcommand{\PDPHDME}{\ensuremath{\textsc{PDPHDME}}}
\newcommand{\CDPHDME}{\ensuremath{\textsc{CDPHDME}}}
\title{Private Mean Estimation of Heavy-Tailed Distributions}
\author{
Gautam Kamath\thanks{Cheriton School of Computer Science, University of Waterloo. {\tt g@csail.mit.edu}. }
\and
Vikrant Singhal\thanks{Khoury College of Computer Sciences, Northeastern University. {\tt singhal.vi@northeastern.edu}. }
\and
Jonathan Ullman\thanks{Khoury College of Computer Sciences, Northeastern University. {\tt jullman@ccs.neu.edu}. }
}
\begin{document}

\maketitle \footnotetext{Authors are in alphabetical order.}

\begin{abstract}
We give new upper and lower bounds on the minimax sample complexity of differentially private mean estimation of distributions with bounded $k$-th moments.
  Roughly speaking, in the univariate case, we show that $$n = \Theta\left(\frac{1}{\alpha^2} + \frac{1}{\alpha^{\frac{k}{k-1}}\varepsilon}\right)$$ samples are necessary and sufficient to estimate the mean to $\alpha$-accuracy under $\varepsilon$-differential privacy, or any of its common relaxations.
  This result demonstrates a qualitatively different behavior compared to estimation absent privacy constraints, for which the sample complexity is identical for all $k \geq 2$.
  We also give algorithms for the multivariate setting whose sample complexity is a factor of $O(d)$ larger than the univariate case. 
\end{abstract}
\newpage
\tableofcontents
\newpage

\section{Introduction}
Given samples $X_1, \dots, X_n$ from a distribution $\mathcal{D}$, can we estimate the mean of $\mathcal{D}$?
This is the problem of \emph{mean estimation} which is, alongside hypothesis testing, one of the most fundamental questions in statistics.
As a result, answers to this problem are known in fairly general settings.
For instance, the empirical mean is known to be an optimal estimate of a distribution's true mean under minimal assumptions.

That said, statistics like the empirical mean put aside any concerns related to the sensitivity, and might vary significantly based on the addition of a single datapoint in the dataset.
While this is not an inherently negative feature, it becomes a problem when the dataset contains personal information, and large shifts based on a single datapoint could potentially violate the corresponding individual's \emph{privacy}.
In order to assuage these concerns, we consider the problem of mean estimation under the constraint of \emph{differential privacy} (DP)~\cite{DworkMNS06}, considered by many to be the gold standard of data privacy.
Informally, an algorithm is said to be differentially private if its distribution over outputs is insensitive to the addition or removal of a single datapoint from the dataset.  Differential privacy has enjoyed widespread adoption, including deployment in by Apple~\cite{AppleDP17}, Google~\cite{ErlingssonPK14}, Microsoft~\cite{DingKY17}, and the US Census Bureau for the 2020 Census~\cite{DajaniLSKRMGDGKKLSSVA17}.

In this vein, a recent line of work~\cite{KarwaV18, KamathLSU19, BunKSW19} gives nearly optimal differentially private algorithms for mean estimation of sub-Gaussian random variables.  Roughly speaking, to achieve accuracy $\alpha$ under $\varepsilon$-differential privacy in a $d$-dimensional setting, one requires $n = \tilde O(\frac{d}{\alpha^2} + \frac{d}{\alpha\varepsilon})$ samples, a mild cost of privacy over the non-private sample complexity of $O(\frac{d}{\alpha^2})$, except when $\eps$ is very small (corresponding to a very high level of privacy).  However, these results all depend on the strong assumption that the underlying distribution being sub-Gaussian.  Indeed, many sources of data in the real world are known to be heavy-tailed in nature, and thus we require algorithms which are effective even under these looser restrictions.  Thus, the core question of this work is
\begin{quote}
\emph{What is the cost of privacy when estimating the mean of heavy-tailed distributions?}
\end{quote} 

We make progress on this question by giving both algorithms and lower bounds for differentially private mean estimation on distributions with bounded $k$-th moments, for $k \geq 2$.  In particular, for univariate distributions, we show that the optimal worst-case sample complexity depends critically on the choice of $k$, which is qualitatively different from the non-private case.

\subsection{Results, Techniques, and Discussion}
In this section, we will assume familiarity with some of the most common notions of differential privacy: pure $\varepsilon$-differential privacy, $\rho$-zero-concentrated differential privacy, and approximate $(\varepsilon, \delta)$-differential privacy.
In particular, one should know that these are in (strictly) decreasing order of strength, formal definitions appear in Section~\ref{sec:preliminaries}.

We first focus on the univariate setting, proving tight upper and lower bounds for estimation subject to bounds on every possible moment.
\begin{theorem} [Theorems~\ref{thm:one-d-pdp} and~\ref{thm:one-d-pdp-lb}] \label{thm:main-univariate}
For every $k \geq 2$, $0 < \eps,\alpha < 1$, and $R > 1$, there is an $\eps$-DP algorithm that takes
$$
n = O\left(
\frac{1}{\alpha^2} +
\frac{1}{\eps \alpha^{\frac{k}{k-1}}} +
\frac{\log(R)}{\eps}\right)
$$
samples from an arbitrary distribution $\cD$ with mean $\mu$ such that $\mu \in (-R,R)$ and $\ex{}{| \cD- \mu|^k} \leq 1$ and returns $\hat\mu$ such that, with high probability, $| \hat\mu - \mu | \leq \alpha$.  Moreover, any such $\eps$-DP algorithm requires $n = \Omega(\frac{1}{\alpha^2} +
\frac{1}{\eps \alpha^{\frac{k}{k-1}}} + \frac{\log(R)}{\eps})$ samples in the worst case.\footnote{Analogous tight bounds hold for zCDP and $(\eps,\delta)$-DP, and these bounds differ only in the dependence on $R$ in the final term.  In particular, $\Omega(1/\alpha^2 + 1/\eps \alpha^{k/(k-1)})$ samples are necessary for any of the variants of differential privacy.}
\end{theorem}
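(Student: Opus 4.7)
The plan is to prove the upper bound with a ``locate-then-clip-and-noise'' algorithm and to prove the lower bound via a two-point coupling argument tailored to the heavy-tailed regime.

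\textbf{Upper bound.} I would first privately locate $\mu$ to within a constant: using the exponential mechanism (or a pure-DP histogram) over an $O(1)$-resolution grid of $(-R,R)$, we obtain $\hat\mu_0$ with $|\hat\mu_0 - \mu| \le 1$ with high probability, using $O(\log(R)/\eps)$ samples. Then I clip each of the remaining samples to $[\hat\mu_0 - T, \hat\mu_0 + T]$ for a threshold $T$ chosen below, compute the empirical mean, and release it plus $\Lap(2T/(n\eps))$ noise; the sensitivity of the clipped mean is $2T/n$, so this phase is $\eps$-DP, and basic composition with the location step is $\eps$-DP overall up to constants. The key step is choosing $T$. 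By H\"older's inequality applied to $|X-\mu|$ and $\ONE\{|X-\mu|>T\}$,
\[
\EE\bigl[|X-\mu|\cdot\ONE\{|X-\mu| > T\}\bigr] \;\le\; \EE[|X-\mu|^k]^{1/k}\cdot\Pr[|X-\mu|>T]^{(k-1)/k} \;\le\; T^{-(k-1)},
\]
where the second inequality uses Markov. Therefore $T = \Theta(\alpha^{-1/(k-1)})$ yields clipping bias $O(\alpha)$, with only a harmless constant blow-up from the fact that clipping is around $\hat\mu_0$ rather than $\mu$. The empirical-mean concentration is $O(1/\sqrt{n})$ since $\EE[|X-\mu|^2] \le \EE[|X-\mu|^k]^{2/k} \le 1$, yielding the $1/\alpha^2$ term; and enforcing $T/(n\eps)\le\alpha$ with the chosen $T$ yields the $1/(\eps \alpha^{k/(k-1)})$ term.

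\textbf{Lower bound.} The $\Omega(1/\alpha^2)$ term is the classical non-private lower bound, and the $\Omega(\log(R)/\eps)$ term follows from a standard packing of $(-R,R)$ by $\Theta(R)$ point masses together with DP's indistinguishability of neighboring datasets. The interesting $\Omega(1/(\eps\alpha^{k/(k-1)}))$ term I would prove by a two-point construction. Set $p = \alpha^{k/(k-1)}$ and let
\[
\mathcal{D}_0 = \delta_0, \qquad \mathcal{D}_1 = (1-p)\,\delta_0 + p\,\delta_{\alpha/p}.
\]
Then $\mathcal{D}_0$ has mean $0$, $\mathcal{D}_1$ has mean $\alpha$, and both have $k$-th central moment bounded by $1$ (for $\mathcal{D}_1$, it is $\le p\cdot(\alpha/p)^k = \alpha^k/p^{k-1} = 1$, up to lower-order terms). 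Moreover $\TV(\mathcal{D}_0, \mathcal{D}_1) = p$, so there is a coupling of $\mathcal{D}_0^n$ and $\mathcal{D}_1^n$ as datasets whose Hamming distance is $\mathrm{Bin}(n,p)$. Invoking the standard coupling-plus-group-privacy inequality for $\eps$-DP algorithms gives $\TV(M(\mathcal{D}_0^n), M(\mathcal{D}_1^n)) = O(n\eps p)$, so any estimator that distinguishes means differing by $\alpha$ with constant probability must have $n = \Omega(1/(\eps p)) = \Omega(1/(\eps\alpha^{k/(k-1)}))$.

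\textbf{Main obstacle.} The central technical point is the clipping analysis: one must carefully verify that the H\"older tail bound still yields $O(\alpha)$ bias when clipping around the noisy center $\hat\mu_0$ rather than $\mu$, which works because the $O(1)$ location error is much smaller than $T = \Theta(\alpha^{-1/(k-1)})$ and the tail contribution decays polynomially. A minor further step is to match the exact $\log R$ dependence and the high-probability guarantees between the locate, clip, and noise phases. The lower bound is conceptually clean, but does rely on turning the coupling into a quantitative $\eps$-DP TV bound, which is by now a standard tool in the DP minimax literature.
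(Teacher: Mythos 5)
Your proposal is correct and follows essentially the same route as the paper: private histogram location, clip to a window of radius $\Theta(\alpha^{-1/(k-1)})$ chosen so that the H\"older/Markov tail bound makes the clipping bias $O(\alpha)$, add Laplace noise scaled to the clipped sensitivity, and prove the $\Omega(1/(\eps\alpha^{k/(k-1)}))$ lower bound via exactly the two-point mixture $\delta_0$ vs.\ $(1-p)\delta_0 + p\,\delta_{\alpha/p}$ with $p=\alpha^{k/(k-1)}$ and the coupling/group-privacy indistinguishability bound (the paper cites Acharya--Sun--Zhang for the same step). The only cosmetic differences are that you locate to $O(1)$ precision rather than the paper's $\Theta(\alpha^{-1/(k-1)})$ bucket width (both give the same $\log R/\eps$ term up to lower-order factors), and the paper adds a median-of-means wrapper to drive the failure probability down, which your sketch leaves implicit.
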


\begin{remark}
    In Theorem~\ref{thm:one-d-pdp} we chose to
    reduce the number of parameters by making
    only the assumption that $\ex{}{|\cD - \mu|^k} \leq 1$,
    and bounding the absolute error $ | \hat\mu - \mu |$.
    More generally, we can consider a setting
    where the variance is $\sigma^2 = \ex{}{(\cD - \mu)^2}$
    and the $k$-th moment satisfies
    $\ex{}{| \cD- \mu |^{k}} \leq M^{k}\sigma^{k}$
    for some $M \geq 1$, and we want to bound the
    normalized error $| \hat\mu - \mu | / \sigma$.
    It is without loss of generality to solve the
    simplified problem.  For example, if the standard
    deviation is known, then we can renormalize
    the data by a factor of $M\sigma$, after which
    the distribution has standard deviation $1/M$
    and $k$-th moment at most $1$.  Now we can apply
    our theorem with accuracy $\alpha' = \alpha / M$,
    in which case we get a sample complexity of
    $O(M^{k/(k-1)}/\eps \alpha^{k/(k-1)})$.
\end{remark}

Note that, absent privacy constraints, the sample complexity of mean estimation with bounded $k$-th moments is $n = O(1/\alpha^2)$ samples, for any $k \geq 2$.
However, if we require the algorithm to be differentially private, there is a qualitatively different picture in which the cost of privacy decays as we have stronger bounds on the moments of the distribution.
Our upper bounds follow a noised and truncated-empirical-mean approach.
While this is similar to prior work on private mean estimation~\cite{KarwaV18, KamathLSU19, CaiWZ19, BunS19}, we must be more aggressive with our truncation than before.
In particular, for the Gaussian case, strong tail bounds allow one to truncate in a rather loose window and not remove any points if the data was actually sampled from a Gaussian.
Since we consider distributions with much heavier tails, trying to not discard any points would result in a very wide truncation window, necessitating excessive amounts of noise.
Instead, we truncate in a way that balances the two sources of error: bias due to valid points being discarded, and the magnitude of the noise due to the width of the truncation window.
To be a bit more precise, our setting of parameters for truncation can be viewed in two different ways: either we truncate so that (in expectation) $1/\varepsilon$ points are removed, and we require $n$ to be large enough to guarantee accuracy, or we truncate so that $\alpha^{k/ (k-1)}$ probability mass is removed, and we require $n$ to be large enough to guarantee privacy.
These two perspectives on truncation are equivalent when $n$ is at the critical value that makes up our sample complexity.

Our lower bound is proved via hypothesis testing.  We demonstrate that two distributions that satisfy the conditions and are indistinguishable with fewer than the prescribed number of samples.
Due to an equivalence between pure and approximate differential privacy in this setting, our lower bounds hold for the most permissive privacy notion of $(\varepsilon,\delta)$-DP, even for rather large values of $\delta$.

Turning to the multivariate setting, we provide separate algorithms for concentrated and pure differential privacy, both of which come at a multiplicative cost of $O(d)$ in comparison to the univariate setting.
We state the concentrated DP result first.
\begin{theorem} [Theorem~\ref{thm:high-d-cdp}]
  \label{thm:high-d-cdp-intro}
  For every $d$, $k \geq 2$, $\eps,\alpha > 0$, and $R > 1$, there is a polynomial-time $\frac{\eps^2}{2}$-zCDP algorithm that takes
   $$n \geq O\left(
  \frac{d}{\alpha^2} +
  \frac{d}{\eps \alpha^{\frac{k}{k-1}}} +
    \frac{ \sqrt{d\log(R)}\log(d)}{\eps}
  \right)
  $$
  samples from an arbitrary distribution $\cD$ on $\R^d$ with mean vector $\mu$ such that $\| \mu \|_2 \leq R$ and bounded $k$-th moments $\sup_{v \in \mathbb{S}^{d-1}} \ex{}{|\langle v, \cD- \mu \rangle|^k} \leq 1$ and returns $\hat\mu$ such that, with high probability, $\| \hat\mu - \mu \|_2 \leq \alpha$.  
\end{theorem}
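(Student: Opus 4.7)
The plan is to follow the two-stage template standard in differentially private mean estimation: first privately produce a coarse estimate $\tilde\mu$ of $\mu$, then release a noised, truncated empirical mean centered at that estimate. The reason we can afford only an $O(d)$ blow-up over the univariate sample complexity of Theorem~\ref{thm:one-d-pdp} is that under zCDP the Gaussian mechanism scales with $L_2$ sensitivity, avoiding the extra $\sqrt{d}$ factor that $L_1$-based (pure-DP) mechanisms would pay.

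For the coarse stage, I would run a private histogram-style localization procedure on each of the $d$ coordinates in parallel, splitting the zCDP budget $\rho = \eps^2/2$ evenly via parallel composition so that each coordinate runs at $\rho/d$. The $j$-th coordinate inherits $\EE[|X_j - \mu_j|^k] \leq 1$ by taking $v = e_j$ in the directional moment bound, and $\mu_j \in [-R,R]$. Localizing $\mu_j$ to within $O(1)$ accuracy costs $O(\sqrt{\log R/\rho'})$ samples per coordinate under $\rho'$-zCDP, and composing across coordinates together with a $\log d$ factor from a union bound for simultaneous success yields the $O(\sqrt{d\log R}\,\log d/\eps)$ term. The output is $\tilde\mu$ with $\|\tilde\mu-\mu\|_\infty = O(1)$.

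For the fine stage, let $Y_i$ denote the truncation of $X_i$ to $B_r(\tilde\mu)$ (replace $X_i$ by $\tilde\mu$ whenever $\|X_i-\tilde\mu\|_2 > r$), so $\|Y_i-\tilde\mu\|_2 \leq r$ and the $L_2$ sensitivity of $\bar Y := \frac{1}{n}\sum_i Y_i$ is $2r/n$. Release $\hat\mu = \bar Y + Z$ with $Z \sim \mathcal{N}(0,(2r/(n\eps))^2\id_d)$; this is $\frac{\eps^2}{2}$-zCDP by the Gaussian mechanism for zCDP. Decompose
\[
\|\hat\mu - \mu\|_2 \leq \|\bar Y - \EE[\bar Y]\|_2 + \|\EE[\bar Y] - \mu\|_2 + \|Z\|_2.
\]
The sampling term is $O(\sqrt{d/n})$ from $\sum_j \vars((Y_1)_j) = O(d)$; the Gaussian term is $O(r\sqrt{d}/(n\eps))$ with high probability; and picking $r = \Theta(\sqrt{d}/\alpha^{1/(k-1)})$ equalizes the bias and noise contributions, making each $O(\alpha)$ under $n \gtrsim d/\alpha^2 + d/(\eps\alpha^{k/(k-1)})$.

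The main obstacle is the truncation bias. The moment assumption only controls one-dimensional projections of $X-\mu$, while the truncation region is Euclidean, so I need to lift directional information into a norm-tail bound. Applying Jensen with exponent $k/2 \geq 1$ to the coordinate-wise squared decomposition of $\|X-\mu\|_2^2$ yields $\EE[\|X-\mu\|_2^k] \leq d^{k/2}$; Markov then gives $\Pr[\|X-\tilde\mu\|_2 > r] = O(d^{k/2}/r^k)$ provided $\|\tilde\mu-\mu\|_2$ is comfortably smaller than $r$. A H\"older argument along each direction converts this into $\|\EE[\bar Y] - \mu\|_2 = O(d^{(k-1)/2}/r^{k-1}) = O(\alpha)$ at the chosen $r$. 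The finicky point I expect to require the most careful bookkeeping is matching the coarse-estimate radius to the truncation radius so that the $\|\tilde\mu-\mu\|_2$ contribution to the bias does not dominate; this coupling between the two stages is the primary place the analysis has to be tightened.
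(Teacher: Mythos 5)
Your two-stage template (coordinate-wise coarse localization under split zCDP budget, then truncate to an $\ell_2$ ball and apply the Gaussian mechanism) matches the structure of Algorithm~\ref{alg:cdphdme}, and your norm-tail bound $\Pr[\|X-\mu\|_2 > t] \leq (\sqrt{d}/t)^k$ via Jensen/Markov is exactly Lemma~\ref{lem:chebyshev-high-d}. But there is a genuine gap in the truncation step, and it is precisely the point you flag at the end as ``finicky bookkeeping'': it is not a bookkeeping matter, it is an obstruction with the truncation rule you chose.

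You propose to truncate by \emph{replacing} out-of-ball points with $\tilde\mu$. Writing $B = B_r(\tilde\mu)$, the bias of $\bar Y$ is then
\[
\EE[\bar Y] - \mu \;=\; -\EE\bigl[(X-\mu)\,\ONE_{X\notin B}\bigr]\;+\;(\tilde\mu-\mu)\,\Pr[X\notin B].
\]
Your H\"older argument controls only the first term, giving $O(d^{(k-1)/2}/r^{k-1}) = O(\alpha)$ at $r=\Theta(\sqrt{d}/\alpha^{1/(k-1)})$. The second term, however, is of order $\|\tilde\mu - \mu\|_2\cdot\Pr[X\notin B] = O\bigl(\sqrt{d}\,\alpha^{k/(k-1)}\bigr)$, since the per-coordinate coarse estimate only yields $\|\tilde\mu-\mu\|_2 = O(\sqrt{d})$. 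This dominates $\alpha$ unless $\alpha \lesssim d^{-(k-1)/2}$, a restrictive regime. Trying to absorb it by inflating $r$ (so that $\Pr[X\notin B]$ is smaller) inflates the Gaussian noise and costs an extra $d^{\Theta(1/k)}$ factor in the sample complexity, which breaks the claimed $O(d)$ dependence.

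The paper avoids this by \emph{discarding} out-of-ball points rather than replacing them. The discard estimator's expectation is the conditional mean $\EE[X\mid X\in B]$, whose bias is $\frac{1}{\Pr[X\in B]}\|\EE[(X-\mu)\ONE_{X\notin B}]\|_2$, which has no extra dimension factor once $\Pr[X\in B]\geq 1/2$. The cost of discarding is that the denominator $|Z'|$ is data-dependent, which the paper handles in the sensitivity calculation via $\ell = \max\{|Z'|,3n/4\}$, and in the accuracy analysis via Lemma~\ref{lem:high-d-schemes-close}, which relates the discard estimator to the (unimplementable) thought-experiment truncation $\trunc(\rho,\xi,X,\mu)$ that replaces with the true mean $\mu$, for which Lemma~\ref{lem:high-d-trunc-mean} gives a clean $O(\alpha)$ bias bound. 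To repair your argument you would either need to switch to discarding and supply the analogues of Lemmata~\ref{lem:high-d-schemes-close} and the sensitivity bound for a data-dependent denominator, or argue differently that the $(\tilde\mu-\mu)\Pr[X\notin B]$ term can be controlled at the claimed sample complexity, which as written it cannot.
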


Similar to the univariate case, we rely upon a noised and truncated empirical mean (with truncation to an $\ell_2$ ball).
The computations required to bound the bias of the truncated estimator are somewhat more involved and technical than the univariate case.

Our pure-DP multivariate mean estimator has the following guarantees.
\begin{theorem}[Theorem~\ref{thm:high-d-pdp}]
  \label{thm:high-d-pure-intro}
For every $d$, $k \geq 2$, $\eps,\alpha > 0$, and $R > 1$, there is a (possibly exponential time) pure $\eps$-DP algorithm that takes
$$n \geq O\left(
\frac{d}{\alpha^2} +
\frac{d}{\eps \alpha^{\frac{k}{k-1}}} +
\frac{ d\log(R) \log(d)}{\eps}
\right)
$$
samples from an arbitrary distribution $\cD$ on $\R^d$ with mean vector $\mu$ such that $\| \mu \|_2 \leq R$ and bounded $k$-th moments $\sup_{v \in \mathbb{S}^{d-1}} \ex{}{|\langle v, \cD- \mu \rangle|^k} \leq 1$ and returns $\hat\mu$ such that, with high probability, $\| \hat\mu - \mu \|_2 \leq \alpha$.  
\end{theorem}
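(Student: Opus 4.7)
The algorithm proceeds in two private stages: a coarse localization that produces an anchor $\hat\mu_0$ within bounded distance of $\mu$, followed by a fine exponential-mechanism refinement over an $\alpha$-net of a small ball around $\hat\mu_0$. The three terms of the claimed sample complexity correspond respectively to statistical accuracy, the refinement stage, and the localization stage.

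\emph{Stage 1 (localization).} I apply the univariate pure-DP estimator of Theorem~\ref{thm:one-d-pdp} coordinate-by-coordinate, splitting the privacy budget as $\eps/(2d)$ per coordinate, targeting constant accuracy, and using per-coordinate failure probability $O(1/d)$. The hypothesis $\sup_{v \in \mathbb{S}^{d-1}} \EE[|\iprod{v, \cD - \mu}|^k] \leq 1$ applied to each standard basis vector yields a univariate $k$-th moment bound in each coordinate, so the univariate guarantee applies. A union bound delivers $\|\hat\mu_0 - \mu\|_\infty = O(1)$, and hence $\|\hat\mu_0 - \mu\|_2 = O(\sqrt d)$, with high probability. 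Reading off the $R$-dependence of the univariate theorem, paying a $d$-factor for composition and an additional $\log d$ factor to boost per-coordinate confidence, gives a total cost of $O(d \log(R)\log(d)/\eps)$ for this stage.

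\emph{Stage 2 (refinement).} Set the clipping radius $B = \Theta(\sqrt d + \alpha^{-1/(k-1)})$, clip each sample $X_i$ to $\ball{\hat\mu_0}{B}$, and let $\bar\mu_{\mathrm{clip}}$ denote the clipped empirical mean. I invoke the exponential mechanism with privacy $\eps/2$ on an $\alpha$-net $\mathcal{N}$ of $\ball{\hat\mu_0}{B}$, using the score
\[
\Score(D, y) = -\|y - \bar\mu_{\mathrm{clip}}(D)\|_2,
\]
whose $\ell_2$-sensitivity is $2B/n$. Since $|\mathcal{N}| = O((B/\alpha)^d)$, the standard utility bound for the exponential mechanism returns $\hat\mu$ with $\|\hat\mu - \bar\mu_{\mathrm{clip}}\|_2 \le \alpha + O(Bd\log(B/\alpha)/(n\eps))$, which is $O(\alpha)$ once $n = \widetilde O(d/(\eps \alpha^{k/(k-1)}))$, absorbing logarithmic factors.

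\emph{Main obstacle.} The crux is showing $\|\bar\mu_{\mathrm{clip}} - \mu\|_2 = O(\alpha)$ with high probability. This decomposes into a bias term $\|\EE[\bar\mu_{\mathrm{clip}}] - \mu\|_2 = O(B^{-(k-1)}) = O(\alpha)$ and a sampling-concentration term of order $O(\sqrt{d/n}) = O(\alpha)$ once $n \ge d/\alpha^2$. The difficult part is the bias computation: I need to translate the \emph{directional} $k$-th moment bound into control on $\Pr[\|X - \mu\|_2 > B]$ and on the $\ell_2$ norm of $\EE[(X - \mu)\,\ONE\{X \notin \ball{\hat\mu_0}{B}\}]$. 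This multivariate truncation-bias analysis mirrors the calculation underlying Theorem~\ref{thm:high-d-cdp-intro} and is exactly what dictates the choice $B \sim \alpha^{-1/(k-1)}$; once it is in hand, the rest of the argument combines mechanically by a basic composition of the two stages.
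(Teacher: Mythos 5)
Your Stage~1 and the general two-phase structure match the paper, but your Stage~2 has a genuine gap and would not deliver the stated sample complexity.

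The problem is the claim that the truncation bias is $\|\EE[\bar\mu_{\mathrm{clip}}] - \mu\|_2 = O(B^{-(k-1)})$. In high dimensions the directional moment bound only gives tail control of the form $\Pr[\|X-\mu\|_2 > t] \le (\sqrt d/t)^k$ (this is Lemma~\ref{lem:chebyshev-high-d}, used in Lemma~\ref{lem:high-d-trunc-mean}), so the bias of clipping to an $\ell_2$ ball of radius $B$ is $O\bigl((\sqrt d/B)^{k-1}\bigr)$, not $O(B^{-(k-1)})$. To drive the bias below $\alpha$ you are forced to take
$B = \Theta\bigl(\sqrt d \,/\alpha^{1/(k-1)}\bigr)$,
i.e.\ a multiplicative $\sqrt d$, not the additive $\sqrt d + \alpha^{-1/(k-1)}$ you wrote. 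Feeding this $B$ into your exponential-mechanism score $\Score(D,y)=-\|y-\bar\mu_{\mathrm{clip}}(D)\|_2$, whose sensitivity is $2B/n$, the standard utility bound over a net of size $(B/\alpha)^{O(d)}$ gives an error term of order $Bd\log(B/\alpha)/(n\eps)$. Requiring this to be $O(\alpha)$ forces
$n = \Omega\bigl(d^{3/2}\log(\cdot)/(\eps\,\alpha^{k/(k-1)})\bigr)$,
a factor of $\sqrt d$ worse than the theorem's second term. This is not a bookkeeping slip: any score whose sensitivity inherits the $\ell_2$ radius of the truncation ball will pay the same $\sqrt d$, for the same reason the Laplace mechanism applied to a clipped mean would.

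This is exactly the obstruction the paper is designed to circumvent. The paper's Stage~2 is not a clipped-mean-plus-score argument but a tournament of pairwise comparisons: for candidates $p,q$ in the net, the data is \emph{projected onto the 1D line $p-q$} and then truncated with a dimension-free radius $\xi = O(1/\alpha^{1/(k-1)})$ (Lemma~\ref{lem:one-d-trunc-mean} suffices; no $\sqrt d$ appears because the projection is univariate). The $\Score$ of a candidate is the minimum number of data points one must change to make it lose some match (Lemma~\ref{lem:score-sensitivity}); by construction this has sensitivity exactly $1$, decoupled from any geometric radius. The exponential-mechanism utility condition then reads $n\alpha/\xi \gg d\log d/\eps$, which gives $n \gg d/(\eps\,\alpha^{k/(k-1)})$ with the correct linear dependence on $d$. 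To fix your proof you would need to replace the $\ell_2$-distance score with a comparison-based score of bounded sensitivity and carry out the projected, univariate truncation-bias analysis, as in Lemmas~\ref{lem:high-d-pdp-case1} and~\ref{lem:high-d-pdp-case2}.
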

We discuss the similarities and differences between Theorems~\ref{thm:high-d-cdp-intro} and~\ref{thm:high-d-pure-intro}.
First, we note that the first two terms in the sample complexity are identical, similar to the multivariate Gaussian case, where distribution estimation under pure and concentrated DP share the same sample complexity~\cite{KamathLSU19, BunKSW19}.
This is contrary to certain problems in private mean estimation, where an $O(\sqrt{d})$ factor often separates the two complexities~\cite{BunUV14, SteinkeU15, DworkSSUV15}.
It appears that these qualitative gaps may or may not arise depending on the choice of norm and the assumptions we put on the underlying distribution (see Section 1.1.4 of~\cite{KamathLSU19} and Remark 6.4 of~\cite{BunKSW19}) for more discussion.
We point out that the estimator of Theorem~\ref{thm:high-d-pure-intro} is not computationally efficient, while the estimator of Theorem~\ref{thm:high-d-cdp-intro} is.
However, even for the well structured Gaussian case, no computationally-efficient algorithm is known under pure DP~\cite{KamathLSU19, BunKSW19}.

Technically, our multivariate-pure-DP algorithm is quite different from our other algorithms.
It bears significant resemblance to approaches based on applying the ``Scheff\'e estimator'' to a cover for the family of distributions~\cite{Yatracos85, DevroyeL96, DevroyeL97, DevroyeL01}, and also a tournament-based approach of Lugosi and Mendelson~\cite{LugosiM19a} for non-private mean estimation with sub-Gaussian rates.
These approaches reduce an estimation problem to a series of pairwise comparisons (i.e., hypothesis tests). 
We cover the space of candidate means, and perform a series of tests of the form ``Which of these two candidates is a better fit for the distribution's mean?''
As mentioned before, there are often gaps between our understanding of multivariate estimation under pure and concentrated DP, and the primary reason is that the Laplace and Gaussian mechanisms have sensitivities based on the $\ell_1$ and $\ell_2$ norms, respectively.
We avoid paying the extra $O(\sqrt{d})$ which often arises in the multivariate setting by reducing to a series of \emph{univariate} problems---given two candidate means, we can project the problem onto the line which connects the two.
By choosing whichever candidate wins all of its comparisons, we can get an accurate estimate for the mean overall.
Crucially, using techniques from~\cite{BunKSW19}, we only pay logarithmically in the size of the cover.

\begin{remark}
    In Theorems~\ref{thm:high-d-cdp} and~\ref{thm:high-d-pdp} we
    use the standard formulation of bounded moments
    for distributions on $\R^d$, which means that
    for every direction $v$, the univariate distribution
    obtained by projecting onto $v$ has bounded $k$-th
    moment. Although this is the standard definition
    of bounded moments for multivariate distributions,
    one could potentially consider other classes of
    heavy-tailed distributions, for example, one which bounds $\ex{}{\|\cD-\mu\|_2^k}$.
\end{remark}

Finally, we prove some lower bounds for multivariate private mean estimation.
\begin{theorem} [Theorem~\ref{thm:high-d-lb}]
Any pure $\eps$-DP algorithm that takes samples from an arbitrary distribution on $\R^d$ with bounded 2nd moments and returns $\hat\mu$ such that $\| \hat\mu - \mu\|_2 \leq \alpha$ requires
$
n = \Omega\left( \frac{d}{\eps \alpha^2} \right)
$
samples from $\cD$ in the worst case.
\end{theorem}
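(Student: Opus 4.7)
My plan is to reduce $\ell_2$-mean estimation to a coordinate-wise sign-recovery problem on a carefully chosen family of product distributions, and then close with the standard pure-DP coupling inequality. Concretely, for each $\sigma \in \{-1,+1\}^d$ I build the product distribution $P_\sigma = \bigotimes_{j=1}^d Q_{\sigma_j}$ on $\R^d$, where, writing $\beta := c\alpha/\sqrt{d}$ for a large absolute constant $c$, $Q_{+1}$ puts mass $p = \beta^2/(1+\beta^2)$ at $x = (1+\beta^2)/\beta$ and the remaining mass at $0$, and $Q_{-1}$ is the reflection of $Q_{+1}$ through the origin. A direct computation gives $\EE[Q_{\pm 1}] = \pm\beta$, $\mathrm{Var}(Q_{\pm 1}) = 1$, and $d_{TV}(Q_{+1}, Q_{-1}) = p = \Theta(\alpha^2/d)$. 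Because $P_\sigma$ is a product, $\sup_{\|v\|_2=1} \EE[\langle v, X - \mu_\sigma \rangle^2] = \max_j \mathrm{Var}(Q_{\sigma_j}) = 1$, so each $P_\sigma$ has mean $\mu_\sigma = \beta\sigma$ and satisfies the bounded-second-moment assumption.

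Next I reduce mean estimation to sign recovery for $\sigma$. If $\|\hat\mu - \mu_\sigma\|_2 \leq \alpha$, a counting argument shows at most $4\alpha^2/\beta^2 = O(d/c^2)$ coordinates satisfy $|\hat\mu_j - \sigma_j\beta| > \beta/2$; on each remaining coordinate, $\mathrm{sign}(\hat\mu_j) = \sigma_j$. Taking $c$ large and combining with the estimator's constant success probability, when $\sigma$ is uniform in $\{-1,+1\}^d$ the expected Hamming distance between $\hat\sigma := \mathrm{sign}(\hat\mu)$ and $\sigma$ is at most $d/4$. Averaging over $j \in [d]$ then produces an index $j^\star$ with $\Pr[\hat\sigma_{j^\star} \neq \sigma_{j^\star}] \leq 1/4$. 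Let $H_\pm$ be the mixture distributions obtained by drawing $\sigma$ uniformly conditioned on $\sigma_{j^\star} = \pm 1$ and then sampling from $P_\sigma$. The test $T(X) := \mathrm{sign}(M(X)_{j^\star})$ then distinguishes $H_+$ and $H_-$ with constant advantage, so by data processing
\[
d_{TV}\bigl(M(H_+^{\otimes n}),\, M(H_-^{\otimes n})\bigr) \;=\; \Omega(1).
\]

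The mixtures $H_+$ and $H_-$ are themselves product distributions on $\R^d$ that disagree only in their $j^\star$-th marginal, so $d_{TV}(H_+, H_-) = d_{TV}(Q_{+1}, Q_{-1}) = O(\alpha^2/d)$. Coupling $n$ i.i.d.\ draws factor by factor yields a coupling $(X,Y)$ of $H_+^{\otimes n}$ and $H_-^{\otimes n}$ with $\EE[d_H(X,Y)] = n\, d_{TV}(H_+, H_-) = O(n\alpha^2/d)$. Applying pointwise pure DP under the coupling (using $1-e^{-\varepsilon k} \leq \varepsilon k$) and integrating then yields
\[
d_{TV}\bigl(M(H_+^{\otimes n}),\, M(H_-^{\otimes n})\bigr) \;\leq\; \varepsilon \cdot \EE[d_H(X,Y)] \;=\; O\!\left(\frac{\varepsilon n \alpha^2}{d}\right).
\]
Combining the upper and lower bounds on this TV distance forces $\varepsilon n\alpha^2/d = \Omega(1)$, i.e.\ $n = \Omega\bigl(d/(\varepsilon\alpha^2)\bigr)$.

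The only genuinely delicate step is the sign-recovery piece: the $\ell_2$-error bound of $\alpha$ only controls the per-coordinate error on average as $\alpha/\sqrt{d}$, so the per-coordinate separation $\beta$ must be tuned to be a sufficiently large multiple of $\alpha/\sqrt{d}$ in order to keep a strict majority of coordinates on the correct side of zero. Once $\beta$ is calibrated, the rest is routine, and the matching $d/(\varepsilon\alpha^2)$ dependence emerges naturally because the per-coordinate TV distance between adjacent hypotheses scales as $\Theta(\beta^2) = \Theta(\alpha^2/d)$.
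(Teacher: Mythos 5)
Your proposal is correct in its essentials but takes a genuinely different route from the paper. The paper proves the bound via a Fano-style \emph{packing} argument (Lemma~\ref{lem:packing-lemma}): it builds $\{0,1\}$-valued coordinate distributions $Q_0,Q_1$, forms products $P_c$ over a code $\cC\subseteq\{0,1\}^d$ of relative distance $1/4$ and size $2^{\Omega(d)}$, observes that all $P_c$ are pairwise within statistical distance $\alpha^2$ while their means are $\Omega(\alpha)$-separated, and then invokes the packing lemma to get $n=\Omega(\log|\cC|/(\alpha^2\eps))=\Omega(d/(\alpha^2\eps))$. You instead run an Assouad-style argument: a two-point family $Q_{\pm1}$ per coordinate, a reduction from $\ell_2$-estimation to per-coordinate sign recovery, averaging to find a single hard coordinate $j^\star$, and then a Le~Cam two-point bound closed with the pure-DP group-privacy/coupling inequality $d_{TV}(M(P^{\otimes n}),M(Q^{\otimes n}))\le\eps\,n\,d_{TV}(P,Q)$. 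Both are standard DP lower-bound templates that ultimately rest on group privacy; yours avoids the coding-theoretic existence step while the paper's packages the counting into a reusable black-box lemma.

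Two small points worth tightening. First, your ``expected Hamming distance $\le d/4$'' step is loose if you only assume a constant success probability: when the estimator fails it can flip up to $d$ signs, so a failure probability of $1/3$ already pushes the expected Hamming distance above $d/4$ regardless of $c$. The clean fix is to work directly under the paper's in-expectation hypothesis $\EE\|\hat\mu-\mu_\sigma\|_2^2\le\alpha^2$: then $\alpha^2 \ge \sum_j \EE(\hat\mu_j-\sigma_j\beta)^2 \ge \beta^2\sum_j\Pr[\mathrm{sign}(\hat\mu_j)\neq\sigma_j]$, giving $\EE[d_H(\hat\sigma,\sigma)]\le\alpha^2/\beta^2=d/c^2$ with no appeal to a high-probability event. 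Second, the conditional law of $(X_1,\dots,X_n)$ given $\sigma_{j^\star}=\pm1$ is not literally $H_\pm^{\otimes n}$, since $\sigma_{-j^\star}$ is drawn once and shared across all $n$ samples rather than resampled per draw. This turns out to be harmless: the two conditional laws still agree exactly on every column other than $j^\star$, their TV distance is still $d_{TV}(Q_{+1}^{\otimes n},Q_{-1}^{\otimes n})$, and the row-by-row coupling on column $j^\star$ still gives expected Hamming distance $n\,d_{TV}(Q_{+1},Q_{-1})=O(n\alpha^2/d)$, so the group-privacy step and the final bound go through unchanged. It is worth stating the argument in terms of the actual conditional laws rather than $H_\pm^{\otimes n}$ to keep it airtight.
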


After the original appearance of this work, it was
brought to our attention that more general lower
bounds exist in a work by Barber and Duchi~\cite{BarberD14}.
While their setting is slightly different than ours
(see Section~\ref{sec:related} for additional discussion),
their results imply the following (rephrased) lower bounds
in our setting.
\begin{theorem}[Proposition~4 of~\cite{BarberD14}]
  \label{thm:bd14}
    Suppose $\cA$ is an $(\eps,0)$-DP algorithm and
    $n \in \N$ is a number such that, for every
    distribution $\cD$ on $\R^d$, such that $\ex{}{\cD} = \mu$
    and $\sup_{v:\|v\|=1}\ex{}{|\langle v,\cD-\mu\rangle|^k} \leq 1$,
    $$\ex{X_1,\dots,X_n\sim\cD,\cA}{\|\cA(X)-\mu\|^2_2}\leq\alpha^2.$$
    Then $n=\Omega\left(\tfrac{d}{\eps\alpha^{\frac{k}{k-1}}}\right)$.
\end{theorem}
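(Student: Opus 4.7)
The plan is to prove the bound via a pure-DP version of Assouad's lemma (equivalently, a packing argument). I first construct a family of distributions $\{P_v : v \in \{-1,+1\}^d\}$ on $\R^d$ as coordinate-independent products $P_v = \bigotimes_{i=1}^d Q_{v_i}$, where the univariate ``two-point'' base measure $Q_s$ puts mass $p$ on $sM$ and mass $1-p$ on $0$. The parameters
\[
M = \alpha_0^{-1/(k-1)}, \qquad p = \alpha_0^{k/(k-1)}, \qquad \alpha_0 \asymp \alpha/\sqrt{d}
\]
are chosen so that (i) the per-coordinate mean of $Q_{v_i}$ is $v_i\alpha_0$, giving $\|\mu_v\|_2 \asymp \alpha$, and (ii) the per-coordinate $k$-th central moment is bounded by a constant. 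Independence of the coordinates then lifts this coordinatewise bound to the required directional bound $\sup_{\|u\|=1}\EE[|\langle u, X-\mu_v\rangle|^k] = O(1)$ (e.g.\ via Rosenthal), so after an absorbable constant rescaling every $P_v$ lies in the hypothesis class of the theorem.

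I next reduce the $\ell_2^2$ risk to $d$ per-coordinate sign-recovery problems. Writing $\tau_i(X) = \mathrm{sign}(\hat\mu_i(X))$,
\[
\EE_{v, X \sim P_v^n}\bigl[\|\hat\mu(X) - \mu_v\|_2^2\bigr] \;\geq\; \alpha_0^2\sum_{i=1}^d \Pr_{v, X}[\tau_i(X) \neq v_i].
\]
For each $i$, let $v^{(i)}$ be the vector obtained from $v$ by flipping the $i$-th coordinate; then $P_v$ and $P_{v^{(i)}}$ differ only in their $i$-th marginal and $\mathrm{TV}(Q_+, Q_-) = p$. Using the standard coupling characterization of pure DP --- if $\cA$ is $\eps$-DP and $X, Y$ are at Hamming distance $j$, then $\Pr[\cA(X)\in S]\leq e^{\eps j}\Pr[\cA(Y)\in S]$ --- coupling $X \sim P_v^n$ with $X' \sim P_{v^{(i)}}^n$ optimally gives expected Hamming distance $np$, so the per-coordinate distinguishing advantage is bounded by $e^{\eps n p}-1 = O(\eps n p)$ in the linear regime. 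Averaging Assouad's two-point bound over $v$ uniform on $\{-1,+1\}^d$ then yields
\[
\EE_{v, X}\bigl[\|\hat\mu - \mu_v\|_2^2\bigr] \;\geq\; \tfrac{d\alpha_0^2}{2}\bigl(\tfrac12 - O(\eps n p)\bigr),
\]
and forcing the left side to lie within the accuracy budget $\alpha^2$ requires $\eps n p = \Omega(1)$, which on substituting $p = \alpha_0^{k/(k-1)}$ and $\alpha_0 \asymp \alpha/\sqrt{d}$ yields the claimed lower bound on $n$.

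The main technical obstacle is keeping the group-privacy coupling estimate in its linear regime: the raw bound $e^{\eps np}-1$ is only useful when $\eps n p \ll 1$, and this constraint dictates the balance between the per-coordinate scale $\alpha_0$ and the per-coordinate TV $p = \alpha_0^{k/(k-1)}$, which ultimately controls the exponents of $\alpha$ and $d$ in the final sample complexity. A secondary technical prerequisite is the directional moment bound $\sup_{\|u\|=1}\EE[|\langle u, X-\mu_v\rangle|^k] \leq 1$, which is handled cleanly by exploiting independence of the coordinates to ensure that projections of $X - \mu_v$ onto arbitrary unit vectors still have bounded $k$-th moment.
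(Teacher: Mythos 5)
The final arithmetic step in your argument does not produce the claimed bound, and this is not a numerical slip but a sign that the construction is the wrong one for $k>2$. With your parameters $p = \alpha_0^{k/(k-1)}$ and $\alpha_0 \asymp \alpha/\sqrt{d}$, the requirement $\eps n p = \Omega(1)$ gives $n = \Omega\bigl(d^{\,k/(2(k-1))}/(\eps\alpha^{k/(k-1)})\bigr)$, whose $d$-exponent equals $1$ only at $k=2$ and drops toward $1/2$ as $k\to\infty$; it does not ``yield the claimed lower bound'' $n = \Omega(d/(\eps\alpha^{k/(k-1)}))$. The obstruction is built into the product family: because the moment condition is a $\sup$ over \emph{all} unit directions including the coordinate axes $e_i$, each marginal $Q_{v_i}$ is individually forced to satisfy $pM^k \lesssim 1$. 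With $pM = \alpha_0$ that caps $p$ at $\alpha_0^{k/(k-1)}$, so there is no slack to recover the missing $d$ factor. (For context: the paper itself only proves the $k=2$ case, Theorem~\ref{thm:high-d-lb}, via essentially the product-plus-packing construction you describe; Theorem~\ref{thm:bd14} is imported from Barber--Duchi, and the product family does not extend.)

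The fix requires correlating the coordinates so that the $k$th-moment budget is spent along a single direction rather than being replicated across all $d$ axes. Take $P_v = (1-p)\delta_0 + p\,\delta_{Mv}$ for $v\in\{-1,+1\}^d$. Writing $X-\mu_v = M(\ONE\{B\}-p)v$ with $B\sim\mathrm{Bern}(p)$, one gets $\sup_{\|u\|=1}\ex{}{|\langle u, X-\mu_v\rangle|^k}\asymp pM^k\,d^{k/2}$ (the sup is at $u=v/\sqrt d$), so the constraint is $pM^k\lesssim d^{-k/2}$ rather than $pM^k\lesssim 1$; combined with $pM=\alpha_0\asymp\alpha/\sqrt d$ this gives $p\asymp d^{k/(2(k-1))}\alpha_0^{k/(k-1)}\asymp\alpha^{k/(k-1)}$, with no $d$-decay. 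But your per-coordinate Assouad decomposition would now also fail: flipping a single bit of $v$ already changes the distribution by $\mathrm{TV}(P_v,P_{v^{(i)}})=p$ (not $p$ ``per coordinate''), so coupling $+$ group privacy per coordinate only recovers $n=\Omega(1/(\eps p))$. You need a packing argument in the style of Lemma~\ref{lem:packing-lemma}: take a code $\cC\subset\{-1,+1\}^d$ of constant relative distance with $\log|\cC|=\Omega(d)$; all pairs $P_v,P_{v'}$ then have TV $\leq 2p$ while $\|\mu_v-\mu_{v'}\|\gtrsim pM\sqrt d\asymp\alpha$, and the packing bound gives $n=\Omega(\log|\cC|/(\eps p))=\Omega(d/(\eps\alpha^{k/(k-1)}))$. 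So both the construction (product $\to$ correlated mixture) and the reduction (Assouad $\to$ packing) need to change for $k>2$.
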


Note that the sample complexity matches the upper bounds in Theorem~\ref{thm:high-d-pure-intro}, showing that our algorithms are optimal for every $k \geq2$.

\subsection{Related Work}
\label{sec:related}
The most closely related works to ours are~\cite{BlumDMN05, BunUV14, SteinkeU15, DworkSSUV15, SteinkeU17a, KarwaV18, KamathLSU19, CaiWZ19, BunS19, AvellaMedinaB19, DuFMBG20, BiswasDKU20}, which study differentially private estimation of the mean of a distribution.
Some of these focus on restricted cases, such as product distributions or sub-Gaussians, which we generalize by making weaker moment-based assumptions.
Some instead study more general cases, including unrestricted distributions over the hypercube -- by making assumptions on the moments of the generating distributions, we are able to get better sample complexities.
The work of Bun and Steinke~\cite{BunS19} explicitly studies mean estimation of distributions with bounded second moment, but their sample complexity can be roughly stated as $O(1/\alpha^2\varepsilon^2)$, whereas we prove a tight bound of $\Theta(1/\alpha^2\varepsilon)$.
Furthermore, we go beyond second-moment assumptions, and show a hierarchy of sample complexities based on the number of moments which are bounded.
Some very recent works~\cite{DuFMBG20, BiswasDKU20} focus on designing practical tools for private estimation of mean and covariance, in univariate and multivariate settings.

After the original appearance of this work, a closely related prior work of Barber and Duchi was brought to our attention~\cite{BarberD14}.
They too study mean estimation of distributions under bounded moment assumptions, but with a qualitatively different definition of what it means for moments to be bounded.
In particular,
they assume that a distribution with mean $\mu$ has bounded $k$th moments
if $\ex{}{\llnorm{X-\mu}^k} \leq 1$, while we say that a distribution has bounded $k$th moments if 
$\ex{}{\abs{\iprod{X-\mu,v}}^k} \leq 1$
for any unit vector $v$. These two moment conditions are on
different scales. For example, in the case of a spherical
Gaussian in $d$ dimensions with covariance matrix $I$, our moment
condition says that the second moment is $1$, while
theirs says that it is $d$. It is easy to show that
their condition implies our condition in general. Therefore, their lower
bounds imply lower bounds under our definition, and we state the implications in Theorem~\ref{thm:bd14}. 
They also prove upper bounds for private mean estimation in their setting, which do not have similar strong implications for our setting.
Under pure DP, our upper bounds match their lower bounds, which shows
optimality of our results. We independently show the same
lower bound for the special case of $k=2$.
Another interesting feature of their results is the
separation between their approximate DP and pure DP upper
bounds. This, as we conjecture, does not exist under our
moment condition, since it is true in case of high-dimensional
Gaussians.

Among other problems, Duchi, Jordan, and Wainwright~\cite{DuchiJW13, DuchiJW17} study univariate mean estimation with moment bounds under the stricter constraint of \emph{local} differential privacy.
Our univariate results and techniques are similar to theirs: morally the same algorithm and lower-bound construction works in both the local and central model.
Translating their results to compare to ours, they show that the sample complexity of mean estimation in the local model is $O(1/\alpha^\frac{2k}{k-1}\varepsilon^2)$, the square of the ``second term'' in our sample complexity for the central model.
However, their investigation is limited to the univariate setting, while we provide new algorithms and lower bounds for the multivariate setting as well.
There has also been some work on locally private mean estimation in the Gaussian case~\cite{GaboardiRS19, JosephKMW19}.

This is just a small sample of work in differentially private distribution estimation, and there has been much study into learning distributions beyond mean estimation.
These are sometimes (but not always) equivalent problems -- for instance, learning the mean of a Gaussian distribution with known covariance is equivalent to learning the distribution in total variation distance.
Diakonikolas, Hardt, and Schmidt~\cite{DiakonikolasHS15} gave algorithms for learning structured univariate distributions.
Privately learning mixtures of Gaussians was considered in~\cite{NissimRS07,KamathSSU19}.
Bun, Nissim, Stemmer, and Vadhan~\cite{BunNSV15} give an algorithm for learning distributions in Kolmogorov distance.
Acharya, Kamath, Sun, and Zhang~\cite{AcharyaKSZ18} focus on estimating properties of a distribution, such as the entropy or support size.
Smith~\cite{Smith11} gives an algorithm which allows one to estimate asymptotically normal statistics with optimal convergence rates, but no finite sample complexity guarantees.
Bun, Kamath, Steinke, and Wu~\cite{BunKSW19} give general tools for private hypothesis selection and apply this to learning many distribution classes of interest.
For further coverage of differentially private statistics, see~\cite{KamathU20}.

In the non-private setting, there has recently been significant work in mean estimation of distributions with bounded second moments, in the ``high probability'' regime. 
That is, we wish to estimate the mean of a distribution with probability $1- \beta$, where $\beta >0$ might be very small.
While the empirical mean is effective in the sub-Gaussian case, more advanced techniques are necessary to achieve the right dependence on $1/\beta$ when we only have a bound on the second moment.
A recent series of papers has focused on identifying effective methods and making them computationally efficient~\cite{LugosiM19a, Hopkins18, CherapanamjeriFB19, DepersinL19, LugosiM19b}.
This high-probability consideration is not the focus of the present work, though we note that, at worst, our estimators incur a multiplicative factor of $\log (1/\beta)$ in achieving this guarantee.
We consider determining the correct dependence on the failure probability with privacy constraints an interesting direction for future study.
Interestingly, the tournament-based approach of Lugosi and Mendelson~\cite{LugosiM19a} is similar to our pure-DP mean estimation algorithm, though there are details which distinguish the two due to the very different settings.

Furthermore, this pure-DP mean estimation algorithm bears a significant resemblance to methods in a line on hypothesis selection, reducing to pairwise comparisons using the Scheff\'e estimator.
This style of approach was pioneered by Yatracos~\cite{Yatracos85}, and refined in subsequent work by Devroye and Lugosi~\cite{DevroyeL96, DevroyeL97, DevroyeL01}.
After this, additional considerations have been taken into account, such as computation, approximation factor, robustness, and more~\cite{MahalanabisS08, DaskalakisDS12b, DaskalakisK14, SureshOAJ14, AcharyaJOS14b, DiakonikolasKKLMS16, AcharyaFJOS18, BousquetKM19, BunKSW19}.

\section{Preliminaries}
\label{sec:preliminaries}

We formally state what it means for a distribution
to have its $k^{\text{th}}$ moment bounded.

\begin{defn}\label{defn:bounded-moment}
    Let $\cD$ be a distribution over $\R^d$ with mean
    $\mu$. We say that for $k \geq 2$, the $k^{\text{th}}$
    moment of $\cD$ is bounded by $M$, if for
    every unit vector $v \in \mathbb{S}^{d-1}$,
    $$\ex{}{\abs{\iprod{X-\mu,v}}^k} \leq M.$$
\end{defn}

Also, we define $\ball{p}{r} \subset \R^d$ to be
the ball of radius $r>0$ centered at $p \in \R^d$.

\subsection{Privacy Preliminaries}

\begin{defn}[Differential Privacy (DP) \cite{DworkMNS06}]
    \label{def:dp}
    A randomized algorithm $M:\cX^n \rightarrow \cY$
    satisfies $(\eps,\delta)$-differential privacy
    ($(\eps,\delta)$-DP) if for every pair of
    neighboring datasets $X,X' \in \cX^n$
    (i.e., datasets that differ in exactly one entry),
    $$\forall Y \subseteq \cY~~~
        \pr{}{M(X) \in Y} \leq e^{\eps}\cdot
        \pr{}{M(X') \in Y} + \delta.$$
    When $\delta = 0$, we say that $M$ satisfies
    $\eps$-differential privacy or pure differential
    privacy.
\end{defn}

\begin{defn}[Concentrated Differential Privacy (zCDP)~\cite{BunS16}]
    A randomized algorithm $M: \cX^n \rightarrow \cY$
    satisfies \emph{$\rho$-zCDP} if for
    every pair of neighboring datasets $X, X' \in \cX^n$,
    $$\forall \alpha \in (1,\infty)~~~D_\alpha\left(M(X)||M(X')\right) \leq \rho\alpha,$$
    where $D_\alpha\left(M(X)||M(X')\right)$ is the
    $\alpha$-R\'enyi divergence between $M(X)$ and
    $M(X')$.\footnote{Given two probability distributions
    $P,Q$ over $\Omega$,
    $D_{\alpha}(P\|Q) = \frac{1}{\alpha - 1}
    \log\left( \sum_{x} P(x)^{\alpha} Q(x)^{1-\alpha}\right)$.}
\end{defn}

Note that zCDP and DP are on different scales, but are otherwise can be ordered from most-to-least restrictive.  Specifically, $(\eps,0)$-DP implies $\frac{\rho^2}{2}$-zCDP, which implies $(\eps \sqrt{\log(1/\delta)}, \delta)$-DP for every $\delta > 0$~\cite{BunS16}.

\medskip Both these definitions are closed under post-processing and can be composed with graceful degradation of the privacy parameters.
\begin{lemma}[Post Processing \cite{DworkMNS06,BunS16}]\label{lem:post-processing}
    If $M:\cX^n \rightarrow \cY$ is
    $(\eps,\delta)$-DP, and $P:\cY \rightarrow \cZ$
    is any randomized function, then the algorithm
    $P \circ M$ is $(\eps,\delta)$-DP.
    Similarly if $M$ is $\rho$-zCDP then the algorithm
    $P \circ M$ is $\rho$-zCDP.
\end{lemma}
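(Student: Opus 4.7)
The plan is to handle the two parts separately, reducing each to a standard inequality about distributions.

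For the $(\eps,\delta)$-DP part, I would first reduce to the case of deterministic $P$. Any randomized $P$ can be written as $P(y) = f(y, R)$ where $R$ is an independent source of randomness; conditioning on $R = r$ gives a deterministic post-processing $f(\cdot, r)$, and the overall $(\eps,\delta)$-DP guarantee follows by integrating (or taking a convex combination) over $r$, since inequalities of the form $a \le e^\eps b + \delta$ are preserved under convex combinations in $(a,b)$. So it suffices to handle deterministic $P : \cY \to \cZ$. Fix neighboring datasets $X, X'$ and an arbitrary measurable $Z \subseteq \cZ$. Let $Y := P^{-1}(Z) \subseteq \cY$. Then by the definition of $P \circ M$,
\[
\pr{}{P(M(X)) \in Z} = \pr{}{M(X) \in Y} \le e^\eps \cdot \pr{}{M(X') \in Y} + \delta = e^\eps \cdot \pr{}{P(M(X')) \in Z} + \delta,
\]
using the $(\eps,\delta)$-DP of $M$ applied to the set $Y$. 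This is exactly the $(\eps,\delta)$-DP condition for $P \circ M$.

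For the zCDP part, the cleanest route is to invoke the data-processing inequality for R\'enyi divergence: for any (possibly randomized) map $P$ and any distributions $\mu, \nu$ on $\cY$, one has $D_\alpha(P_* \mu \,\|\, P_* \nu) \le D_\alpha(\mu \,\|\, \nu)$ for every $\alpha \in (1,\infty)$. Applying this with $\mu = M(X)$ and $\nu = M(X')$ for neighboring $X, X'$ gives
\[
D_\alpha\big((P \circ M)(X) \,\|\, (P \circ M)(X')\big) \le D_\alpha\big(M(X) \,\|\, M(X')\big) \le \rho \alpha,
\]
where the last inequality uses the $\rho$-zCDP of $M$. This is the defining condition for $P \circ M$ to be $\rho$-zCDP.

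The only non-routine ingredient is the data-processing inequality for R\'enyi divergence, which I would cite rather than reprove; it is standard and appears, for example, in~\cite{BunS16}. The main conceptual point in the DP part is the reduction from randomized to deterministic post-processing, which is where one has to be a little careful to ensure that the inequality survives the averaging over $P$'s internal randomness, but this is immediate from linearity of expectation applied to both sides of $\pr{}{M(X)\in Y} \le e^\eps \pr{}{M(X')\in Y} + \delta$.
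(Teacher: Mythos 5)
Your proof is correct. The paper states this lemma without proof, simply citing \cite{DworkMNS06,BunS16}; your argument---reducing the $(\eps,\delta)$-DP case to deterministic post-processing via conditioning on $P$'s internal randomness and averaging, then applying the DP guarantee to the preimage $P^{-1}(Z)$, and invoking the R\'enyi data-processing inequality for the zCDP case---is exactly the standard proof found in those references.
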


\begin{lemma}[Composition of DP~\cite{DworkMNS06, DworkRV10, BunS16}]\label{lem:composition}
    If $M$ is an adaptive composition of differentially
    private algorithms $M_1,\dots,M_T$, then the following
    all hold:
    \begin{enumerate}
        \item If $M_1,\dots,M_T$ are
            $(\eps_1,\delta_1),\dots,(\eps_T,\delta_T)$-DP
            then $M$ is $(\eps,\delta)$-DP for
            $$\eps = \sum_t \eps_t~~~~\textrm{and}~~~~\delta = \sum_t \delta_t.$$
        \item If $M_1,\dots,M_T$ are
            $(\eps_0,\delta_1),\dots,(\eps_0,\delta_T)$-DP
            for some $\eps_0 \leq 1$, then for every $\delta_0 > 0$, $M$
            is $(\eps, \delta)$-DP for
            $$\eps = \eps_0 \sqrt{6 T \log(1/\delta_0)}~~~~
                \textrm{and}~~~~\delta = \delta_0 + \sum_t \delta_t$$
        \item If $M_1,\dots,M_T$ are $\rho_1,\dots,\rho_T$-zCDP
            then $M$ is $\rho$-zCDP for $\rho = \sum_t \rho_t$.
    \end{enumerate}
\end{lemma}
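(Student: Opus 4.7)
The plan is to prove the three parts in order, leaning on successively more sophisticated tools. Throughout I fix neighboring datasets $X,X' \in \cX^n$ and treat $M$ as the adaptive composition $M(X) = (M_1(X), M_2(X, M_1(X)), \dots, M_T(X, M_1, \dots, M_{T-1}))$, where each $M_t$'s privacy guarantee holds for every fixing of the prior transcript.

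For Part 1 (basic composition), I would proceed by induction on $T$, with $T=1$ trivial. The inductive step reduces to showing that if $A$ is $(\eps_A,\delta_A)$-DP and $B(\cdot, y)$ is $(\eps_B, \delta_B)$-DP for each $y$, then $(A,B\circ A)$ is $(\eps_A+\eps_B, \delta_A+\delta_B)$-DP. The standard trick is to decompose any output event $Y$ by conditioning on the value of $A$: apply the $(\eps_A,\delta_A)$ guarantee to the marginal on $A(X)$ (handling the $\delta_A$ slack via a union bound on a ``bad'' event of mass $\leq \delta_A$), then apply the $(\eps_B,\delta_B)$ guarantee pointwise to the conditional distribution of $B$. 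Multiplying the two $e^{\eps}$ factors gives $e^{\eps_A+\eps_B}$ and the slacks add.

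For Part 2 (advanced composition), I would introduce the privacy loss random variable $L_t = \log\frac{\pr{}{M_t(X)=y_t \mid y_{<t}}}{\pr{}{M_t(X')=y_t \mid y_{<t}}}$ evaluated at a transcript drawn from $M(X)$. The key lemma (Dwork--Rothblum--Vadhan) is that for an $(\eps_0,\delta_t)$-DP mechanism with $\eps_0 \leq 1$, after excluding a $\delta_t$-mass event, the privacy loss $L_t$ has mean at most $\eps_0(e^{\eps_0}-1) \leq 2\eps_0^2$ and is bounded in absolute value by $\eps_0$, so it is sub-Gaussian with parameter $O(\eps_0)$. Summing, $\sum_t L_t$ has mean $O(T\eps_0^2)$ and sub-Gaussian tails with parameter $O(\eps_0\sqrt{T})$. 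Applying Azuma/Hoeffding in the adaptive (martingale) form gives $\sum_t L_t \leq \eps_0\sqrt{6T\log(1/\delta_0)}$ except with probability $\delta_0$. The standard conversion between bounded privacy loss and $(\eps,\delta)$-DP then yields the stated bound, with total slack $\delta_0 + \sum_t \delta_t$ from the union bound over the bad events.

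For Part 3 (zCDP composition), I would use the chain rule for R\'enyi divergence: for any $\alpha > 1$ and any joint distributions $P,Q$ over $(y_1,\dots,y_T)$,
\[
D_\alpha(P \,\|\, Q) \;\leq\; \sum_{t=1}^T \sup_{y_{<t}} D_\alpha\bigl(P(y_t\mid y_{<t})\,\|\,Q(y_t\mid y_{<t})\bigr).
\]
Applying this with $P = M(X)$ and $Q = M(X')$, each conditional divergence is bounded by $\rho_t \alpha$ by the zCDP guarantee of $M_t$ (which holds for every fixing of the prior outputs and hence for every transcript prefix). Summing gives $D_\alpha(M(X)\,\|\,M(X')) \leq (\sum_t \rho_t)\alpha$, which is exactly $\rho$-zCDP for $\rho = \sum_t \rho_t$.

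The main obstacle is Part 2: the careful moment/tail analysis of the privacy loss and the adaptive Azuma step. Parts 1 and 3 are essentially bookkeeping once one has the right inductive formulation and the R\'enyi chain rule, respectively. Since this is the standard composition theorem stated verbatim from the cited sources, I would actually just invoke those references rather than reproducing the full argument.
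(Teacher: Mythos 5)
Your proposal is correct: the paper itself gives no proof of this lemma, simply citing \cite{DworkMNS06, DworkRV10, BunS16}, which matches your closing remark that one would just invoke the references. Your sketches of the three parts (inductive basic composition, DRV privacy-loss/Azuma argument for advanced composition, and the R\'enyi chain rule for zCDP) are accurate summaries of the standard arguments in those sources, up to tracking the explicit constant $\sqrt{6}$ in Part 2.
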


\subsection{Basic Differentially Private Mechanisms.}
We first state standard results on achieving
privacy via noise addition proportional to
sensitivity~\cite{DworkMNS06}.

\begin{defn}[Sensitivity]
    Let $f : \cX^n \to \R^d$ be a function,
    its \emph{$\ell_1$-sensitivity} and
    \emph{$\ell_2$-sensitivity} are
    $$\Delta_{f,1} = \max_{X \sim X' \in \cX^n} \| f(X) - f(X') \|_1
    ~~~~\textrm{and}~~~~\Delta_{f,2} = \max_{X \sim X' \in \cX^n} \| f(X) - f(X') \|_2,$$
    respectively.
    Here, $X \sim X'$ denotes that $X$ and $X'$
    are neighboring datasets (i.e., those that
    differ in exactly one entry).
\end{defn}

For functions with bounded $\ell_1$-sensitivity,
we can achieve $\eps$-DP by adding noise from
a Laplace distribution proportional to
$\ell_1$-sensitivity. For functions taking values
in $\R^d$ for large $d$ it is more useful to add
noise from a Gaussian distribution proportional
to the $\ell_2$-sensitivity, to get $(\eps,\delta)$-DP
and $\rho$-zCDP.

\begin{lem}[Laplace Mechanism] \label{lem:laplacedp}
    Let $f : \cX^n \to \R^d$ be a function
    with $\ell_1$-sensitivity $\Delta_{f,1}$.
    Then the Laplace mechanism
    $$M(X) = f(X) + \Lap\left(\frac{\Delta_{f,1}}
        {\eps}\right)^{\otimes d}$$
    satisfies $\eps$-DP.
\end{lem}

\begin{lem}[Gaussian Mechanism] \label{lem:gaussiandp}
    Let $f : \cX^n \to \R^d$ be a function
    with $\ell_2$-sensitivity $\Delta_{f,2}$.
    Then the Gaussian mechanism
    $$M(X) = f(X) + \cN\left(0,\left(\frac{\Delta_{f,2}
        \sqrt{2\ln(2/\delta)}}{\eps}\right)^2 \cdot \id_{d \times d}\right)$$
    satisfies $(\eps,\delta)$-DP.
    Similarly, the Gaussian mechanism
    $$M_{f}(X) = f(X) +
        \cN\left(0, \left(\frac{\Delta_{f,2}}{\sqrt{2\rho}}\right)^2 \cdot \id_{d \times d}\right)$$
    satisfies $\rho$-zCDP.
\end{lem}

\begin{lem}[Private Histograms]\label{lem:priv-hist}
    Let $(X_1,\dots,X_n)$ be samples in some data universe
    $U$, and let $\Omega = \{h_u\}_{u \subset U}$
    be a collection of disjoint histogram buckets over $U$.
    Then we have $\eps$-DP, $\rho$-zCDP, and $(\eps,\delta)$-DP
    histogram algorithms with the following guarantees.
    \begin{enumerate}
        \item $\eps$-DP:
            $\ell_\infty$ error - $O\left(\tfrac{\log(\abs{U}/\beta)}{\eps}\right)$
            with probability at least $1-\beta$;
            run time - $\poly(n,\log(\abs{U}/\eps\beta))$
        \item $\rho$-zCDP:
            $\ell_\infty$ error - $O\left(\sqrt{\frac{\log(\abs{U}/\beta)}{\rho}}\right)$
            with probability at least $1-\beta$;
            run time - $\poly(n,\log(\abs{U}/\rho\beta))$
        \item $(\eps,\delta)$-DP:
            $\ell_\infty$ error - $O\left(\tfrac{\log(1/\delta\beta)}{\eps}\right)$
            with probability at least $1-\beta$;
            run time - $\poly(n,\log(\abs{U}/\eps\beta))$
    \end{enumerate}
\end{lem}

Part~1 follows from \cite{BalcerV19}. Part~2
follows trivially by using the Gaussian Mechanism
(Lemma~\ref{lem:gaussiandp}) instead of the Laplace
Mechanism (Lemma~\ref{lem:laplacedp}) in Part~1.
Part~3 holds due to \cite{BunNS16,Vadhan17}.
    
Finally, we recall the widely used \emph{exponential mechanism}.  
    
\begin{lem}[Exponential Mechanism \cite{McSherryT07}]
    \label{lem:exp-mechanism}
    The exponential mechanism $\cM_{\eps,S,\Score}(X)$
    takes a dataset $X \in \cX^n$, computes a score ($\Score : \cX^n \times S \to \R$)
    for each $p \in S$ with respect to $X$, and
    outputs $p \in S$ with probability proportional
    to $\exp\left(\frac{\eps \cdot \Score(X,p)}{2 \cdot \Delta_{\Score,1}}\right)$,
    where
    $$\Delta_{\Score,1} = \max\limits_{p \in S}\max\limits_{X\sim X' \in \cX^n}
        {\abs{\Score(X,p)-\Score(X',p)}}.$$
    It satisfies the following.
    \begin{enumerate}
        \item $\cM$ is $\eps$-differentially private.
        \item Let $\OPT_\Score(X) = \max\limits_{p \in S}\{\Score(X,p)\}$.
            Then
            $$\pr{}{\Score(X,\cM_{\eps,S,\Score}(X)) \leq
                \OPT_\Score(X) - \frac{2\Delta_{\Score,1}}{\eps}(\ln(\abs{S} + t))}
                \leq e^{-t}.$$
    \end{enumerate}
\end{lem}

\section{Estimating in One Dimension}  \label{sec:one-d}

In this section, we discuss estimating the mean
of a distribution whilst ensuring pure DP.
Obtaining CDP and approximate DP algorithms for
this is trivial, once we have the algorithm for
pure DP, as shall be discussed towards the end
of the upper bounds section. Finally, we show
that our upper bounds are optimal.

\subsection{Technical Lemmata}

Here, we lay out the two main technical lemmata
that we would use to prove our main results for
the section. The first lemma says that if we truncate
the distribution to within a large interval that is
centered close to the mean, then the mean of this
truncated distribution will be close to the original
mean.

\begin{lem}\label{lem:one-d-trunc-mean}
    Let $\cD$ be a distribution over $\R$ with
    mean $\mu$, and $k^{\text{th}}$ moment bounded
    by $1$. Let $\rho \in \R$, $0 < \tau < \tfrac{1}{16}$,
    and $\xi = \tfrac{C}{\tau^{\frac{1}{k-1}}}$ for a
    constant $C \geq 6$.
    Let $X \sim \cD$, and $Z$ be the following random
    variable.
    \[
    Z =
    \begin{cases}
        \rho - \xi & \text{if $X < \rho - \xi$}\\
        X & \text{if $\rho - \xi \leq X \leq \rho + \xi$}\\
        \rho + \xi & \text{if $X > \rho + \xi$}
    \end{cases}
    \]
    If $\abs{\mu - \rho} \leq \tfrac{\xi}{2}$,
    then $\abs{\mu-\ex{}{Z}} \leq \tau$.
\end{lem}
\begin{proof}
    Without loss of generality, we assume that $\rho \geq \mu$,
    since the argument for the other case is symmetric.
    Let $a = \rho - \xi$ and $b = \rho + \xi$.
    \begin{align}
        \abs{\mu-\ex{}{Z}} \leq \abs{\ex{}{(X-a)\ONE_{X < a}}}
            + \abs{\ex{}{(X-b)\ONE_{X > b}}}\label{eq:one-d-trunc-mean}
    \end{align}
    Now, we compute the first term on the right
    hand side. The second term would follow by an
    identical argument.
    \begin{align*}
        \abs{\ex{}{(X-a)\ONE_{X<a}}} &= \abs{\ex{}{(X-\mu-(a-\mu))\ONE_{X<a}}}\\
            &\leq \ex{}{\abs{X-\mu}\ONE_{X<a}} + (\abs{a-\mu})\ex{}{\ONE_{X<a}}\\
            &\leq \left(\ex{}{\abs{X-\mu}^k}\right)^{\frac{1}{k}}
                \left(\pr{}{X<a}\right)^{\frac{k-1}{k}}
                + \left(\abs{a-\mu}\right)\pr{}{X<a}\\
            &\leq \left(\frac{2}{C}\right)^{k-1}\tau
                + C\left(\frac{2}{C}\right)^k\tau\\
            &= 3\left(\frac{2}{C}\right)^{k-1}\tau
    \end{align*}
    In the above, the first inequality follows from
    linearity of expectations, triangle inequality,
    and Lemma~\ref{lem:jensen}. The second inequality
    follows from Lemma~\ref{lem:holder}, and the third
    inequality follows from the fact that
    $$\pr{}{X<a} \leq \pr{}{X<\mu-\tfrac{\xi}{2}}
        \leq \left(\frac{2}{C}\right)^k\tau^{\frac{k}{k-1}},$$
    which holds due to Lemma~\ref{lem:chebyshev}.
    Similarly, we can bound the second term in (\ref{eq:one-d-trunc-mean}) as follows:
    $$\abs{\ex{}{(X-b)\ONE_{X > b}}} \leq 4\left(\frac{2}{C}\right)^{k-1}\tau.$$
    Substituting these two values in Inequality~\ref{eq:one-d-trunc-mean},
    we get that
    $$\abs{\mu - \ex{}{Z}} \leq 7\left(\frac{2}{C}\right)^{k-1}\tau
        \leq \tau.$$
\end{proof}

The next lemma says that if we take a large number
of samples from any distribution over $\R$, whose
$k^{\text{th}}$ moment is bounded by $1$, then with
high probability, the empirical mean of the samples
lies close to the mean of the distribution.

\begin{lem}\label{lem:one-d-emp-mean-close}
    Let $\cD$ be a distribution over $\R$ with
    mean $\mu$ and $k^{\text{th}}$ moment bounded
    by $1$. Suppose $(X_1,\dots,X_n)$ are samples
    from $\cD$, where
    $$n \geq O\left(\frac{1}{\alpha^2}\right).$$
    Then with probability at least $0.9$,
    $$\abs{\frac{1}{n}\sum\limits_{i=1}^{n}{X_i}-\mu} \leq \alpha.$$
\end{lem}
\begin{proof}
    Using Lemma~\ref{lem:jensen}, we know that
    $$\ex{}{\abs{X-\mu}^2} \leq \ex{}{\abs{X-\mu}^k}^{\frac{2}{k}} \leq 1.$$
    Let $Z = \tfrac{1}{n}\sum\limits_{i=1}^{n}{X_i}$.
    Then we have the following.
    \begin{align*}
        \ex{}{\abs{Z-\mu}^2} &=
                \frac{1}{n^2}\ex{}{\abs{\sum\limits_{i=1}^{n}{X_i}-\mu}^2}\\
            &\leq \frac{1}{n^2}\ex{}{\sum\limits_{i=1}^{n}{\abs{X_i-\mu}^2}}\\
            &= \frac{1}{n^2}\sum\limits_{i=1}^{n}\ex{}{\abs{X_i-\mu}^2}\\
            &\leq \frac{1}{n}.
    \end{align*}
    Then using Lemma~\ref{lem:chebyshev}, we have
    $$\pr{}{\abs{Z-\mu} > \alpha} \leq \frac{1}{\sqrt{n}\alpha} \leq 0.9.$$
\end{proof}

\subsection{The Algorithm}

Here, we give an $\eps$-DP algorithm to estimate the
mean. The main algorithm consists of two parts: limiting
the data to a reasonable range so as to achieve
privacy, and to limit the amount of noise added for
it to get optimal accuracy; and mean estimation in
a differentially private way. We analyze the two
separately.

\subsubsection{Private Range Estimation}

Here, we explore the first part of the algorithm,
that is, limiting the range of the data privately.
We do that in a way similar to that of \cite{KarwaV18}.
To summarize, we use differentially private
histograms (Lemma~\ref{lem:priv-hist}) to find
the bucket with the largest number of points.
Due to certain moments of the distribution
being bounded, the points tend to concentrate
around the mean. Therefore, the above bucket
would be the one closest to the mean, and by
extending the size of the bucket a little, we
could get an interval that contains a large
number of points along with the mean. We show
that the range is large enough that the mean
of the distribution truncated to that interval
will not be too far from the original mean.

\begin{algorithm}[h!] \label{alg:pdpre}
\caption{Pure DP Range Estimator
    $\PDPRE_{\eps, \alpha, R}(X)$}
\KwIn{Samples $X_1,\dots,X_{n} \in \R$.
    Parameters $\eps, \alpha, R > 1$.}
\KwOut{$[a,b] \in \R$.}

Set parameters:
    $r \gets 10 / \alpha^{\frac{1}{k-1}}$ \\
\tcp{Estimate range}
Divide $[-R-2r,R+2r]$ into buckets: $[-R-2r,-R),\dots,[-2r,0),[0,2r),\dots,[R,R+2r]$\\
Run Pure DP Histogram for $X$ over the above buckets\\
Let $[a,b]$ be the bucket that has the maximum number of points\\
Let $I \gets [a-2r,b+2r]$ \\
\Return $I$ 
\end{algorithm}

\begin{thm}\label{thm:one-d-range-pdp}
    Let $\cD$ be a distribution over $\R$ with
    mean $\mu\in[-R,R]$ and $k^{\text{th}}$ moment bounded
    by $1$. Then for all $\eps > 0$ and $0 < \alpha < \tfrac{1}{16}$,
    there exists an $\eps$-DP algorithm that takes
    $$n \geq O\left(\frac{1}{\alpha} +
        \frac{\log(R\alpha)}{\eps}\right)$$
    samples from $\cD$, and outputs $I=[a,b] \subset \R$,
    such that with probability at least $0.9$,
    the following conditions all hold:
    \begin{enumerate}
        \item $b-a \in \Theta\left(\tfrac{1}{\alpha^{\frac{1}{k-1}}}\right)$.
        \item At most $\alpha n$ samples lie outside $I$.
        \item $\mu \in I$ and
            $b-\mu, \mu-a \geq \tfrac{10}{\alpha^{\frac{1}{k-1}}}$.
    \end{enumerate}
\end{thm}
\begin{proof}
    We separate the privacy and accuracy proofs
    for Algorithm~\ref{alg:pdpre} for clarity.

    \noindent \textbf{Privacy}:\\
    Privacy follows from Lemma~\ref{lem:priv-hist}
    and post-processing of the private output of
    private histograms (Lemma~\ref{lem:post-processing}).

    \noindent\textbf{Accuracy}:\\
    The first part follows because the intervals
    are deterministically constructed to have length
    $6r \in \Theta\left(\tfrac{1}{\alpha^{\frac{1}{k-1}}}\right)$.

    \noindent Let $(X_1,\dots,X_n)$ be independent
    samples from $\cD$. We know from Lemma~\ref{lem:chebyshev}
    that,
    $$\pr{X\sim\cD}{\abs{X-\mu} > \frac{10}{\alpha^{\frac{1}{k-1}}}}
        \leq
        \pr{X\sim\cD}{\abs{X-\mu} > \frac{10}{\alpha^{\frac{1}{k}}}}
        \leq \frac{\alpha}{10^k}.$$
    Using Lemma~\ref{lem:chernoff-mult}, we have,
    $\pr{}{\abs{\{i:X_i \not\in [\mu-r,\mu+r]\}} > \alpha n} < 0.05,$
    because $n \geq O(1/\alpha)$. Therefore, there
    has to be a bucket that contains at least
    $0.5(1-\alpha)n \geq \tfrac{n}{4}$ points from
    the dataset, which implies that the bucket
    containing the maximum number of points has
    to have at least $\tfrac{n}{4}$ points. Now,
    from Lemma~\ref{lem:priv-hist}, we know that
    the noise added to any bucket cannot exceed
    $\tfrac{n}{16}$. Therefore, the noisy value
    for the largest bucket has to be at least
    $\tfrac{3n}{16}$. Since, all these points lie
    in a single bucket, and include points that
    are not in the tail of the distribution,
    the mean lies in either the same bucket,
    or in an adjacent bucket because
    the distance from the mean is at most $r$.
    Hence, the constructed interval of length
    $6r$ contains the mean and at least
    $1-\alpha$ fraction of the points.

    \noindent From the above, since the mean
    is at most $r$ far from at least one of
    $a$ and $b$, the end points of $I$ must
    be at least $r$ far from $\mu$.
\end{proof}

The CDP equivalent of Algorithm~\ref{alg:pdpre}
(that we call, "$\CDPRE$") could be created by
using the CDP version of private histograms as
mentioned in Lemma~\ref{lem:priv-hist}. Its
approximate DP version (which we call, "\ADPRE")
could be obtained via approximate differentially
private histograms as mentioned in the same lemma.

\begin{thm}\label{thm:one-d-range-adp-cdp}
    Let $\cD$ be a distribution over $\R$ with
    mean $\mu\in[-R,R]$ and $k^{\text{th}}$ moment bounded
    by $1$. Then for all $\eps, \delta, \rho > 0$
    and $0 < \alpha < \tfrac{1}{16}$,
    there exist $(\eps,\delta)$-DP and $\rho$-zCDP
    algorithms that take
    $$n_{(\eps,\delta)} \geq O\left(\frac{1}{\alpha} +
        \frac{\log(1/\delta)}{\eps}\right)$$
    and
    $$n_{\rho} \geq O\left(\frac{1}{\alpha} +
        \sqrt{\frac{\log(R\alpha)}{\rho}}\right)$$
    samples from $\cD$ respectively, and output
    $I=[a,b] \subset \R$, such that with probability
    at least $0.9$, the following hold.
    \begin{enumerate}
        \item $b-a \in \Theta\left(\tfrac{1}{\alpha^{\frac{1}{k-1}}}\right)$.
        \item At most $\alpha n$ samples lie outside $I$.
        \item $\mu \in I$ and
            $b-\mu, \mu-a \geq \tfrac{10}{\alpha^{\frac{1}{k-1}}}$.
    \end{enumerate}
\end{thm}

\subsubsection{Private Mean Estimation}

Now, we detail the second part of the algorithm,
that is, private mean estimation. In the previous
step, we ensured that the range of the data is large
enough that the mean of the truncated distribution
would not be too far from the original mean. Here,
we show that this range is small enough, that the
noise we add to guarantee privacy is not too large.
This would imply that the whole algorithm, whilst
being differentially private, would also be accurate
without adding a large overhead in the sample complexity.

\begin{algorithm}[h!] \label{alg:pdpodme}
\caption{Pure DP $1$-Dimensional Mean Estimator
    $\PDPODME_{\eps, \alpha, R}(X)$}
\KwIn{Samples $X_1,\dots,X_{2n} \in \R$.
    Parameters $\eps, \alpha, R > 0$.}
\KwOut{$\wh{\mu} \in \R$.}

Set parameters:
$m \gets 200\log(2/\beta)$ \qquad
    $Z \gets (X_1,\dots,X_n)$ \qquad
    $W \gets (X_{n+1},\dots,X_{2n})$
\\

\tcp{Partition the dataset, and estimate mean on each subset}
\For{$i \gets 1,\dots,m$}{
    Let $Y^i \gets (W_{(i-1)\cdot \frac{n}{k}+1},\dots,W_{i\cdot \frac{n}{k}-1})$
        and $Z^i \gets (Z_{(i-1)\cdot \frac{n}{k}+1},\dots,Z_{i\cdot \frac{n}{k}-1})$\\

    \tcp{Find small interval containing the mean and large fraction of points}
    $I_i \gets \PDPRE_{\eps,\alpha,\R}(Z_i)$ \quad
        and \quad $r_i \gets \abs{I_i}$

    \tcp{Truncate to within the small interval above}
    \For{$y \in Y^i$}{
        \If{$x \not\in I_i$}{
            Set $x$ to be the nearest end-point of $I_i$
        }
    }

    \tcp{Estimate the mean}
    $\wh{\mu_i} \gets \tfrac{1}{n}\sum\limits_{y \in Y^i}{y}
        + \Lap\left(\tfrac{mr_i}{\eps n}\right)$
}

\tcp{Median of means to select a good mean with high probability}
$\wh{\mu} \gets \Median(\mu_1,\dots,\mu_m)$

\Return \wh{\mu}

\end{algorithm}

\begin{thm}\label{thm:one-d-pdp}
    Let $\cD$ be a distribution over $\R$ with
    mean $\mu\in[-R,R]$ and $k^{\text{th}}$ moment bounded
    by $1$. Then for all $\eps,\alpha,\beta > 0$,
    there exists an $\eps$-DP algorithm that takes
    $$n \geq O\left(
        \frac{\log(1/\beta)}{\alpha^2} +
        \frac{\log(1/\beta)}{\eps \alpha^{\frac{k}{k-1}}} +
        \frac{\log(R)\log(1/\beta)}{\eps}\right)$$
    samples from $\cD$, and outputs $\wh{\mu} \in \R$,
    such that with probability at least $1-\beta$,
    $$\abs{\mu-\wh{\mu}} \leq \alpha.$$
\end{thm}
\begin{proof}
    We first prove the privacy guarantee of
    Algorithm~\ref{alg:pdpodme}, then
    move on to accuracy.

    \noindent \textbf{Privacy}:\\
    The step of finding a good interval $I_i$
    is $\tfrac{\eps}{2}$-DP by Theorem~\ref{thm:one-d-range-pdp}.
    Then the step of estimating the mean by
    adding Laplace noise is $\tfrac{\eps}{2}$-DP
    by Lemma~\ref{lem:laplacedp}. Therefore,
    by Lemma~\ref{lem:composition}, each iteration
    is $\eps$-DP. Since, we use each disjoint part
    of the dataset only once in the entire loop,
    $\eps$-DP still holds. Finally, we operate
    on private outputs to find the median, therefore,
    by Lemma~\ref{lem:post-processing}, the algorithm
    is $\eps$-DP.

    \noindent \textbf{Accuracy}:\\
    We fix an iteration $i$, and discuss the
    accuracy of that step. The accuracy of the
    rest of the iterations would be guaranteed
    in the same way, since all iterations are
    independent.
    \begin{claim}
        Fix $1 \leq i \leq m$. Then in iteration
        $i$, if
        $$\abs{Y^i} \geq O\left(\frac{1}{\alpha^2} +
            \frac{1}{\eps \alpha^{\frac{k}{k-1}}} +
            \frac{\log(R)}{\eps}\right),$$
        then with probability at least $0.7$,
        $\abs{\mu_i-\mu} \leq \alpha$.
    \end{claim}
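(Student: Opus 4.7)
I would decompose the error $|\mu_i - \mu|$ into three independent sources: (i) the bias introduced by clipping to $I_i$, (ii) the sampling error of the empirical mean of the clipped samples, and (iii) the added Laplace noise. I would argue each is at most $\alpha/3$ with probability at least $0.9$ and union-bound to get the claimed $0.7$ success probability.

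\emph{Good interval.} First, invoke Theorem~\ref{thm:one-d-range-pdp} on $Z^i$: provided $|Z^i| = \Omega(1/\alpha + \log(R)/\eps)$, with probability at least $0.9$ the interval $I_i = [a_i,b_i]$ satisfies $|I_i| = \Theta(1/\alpha^{1/(k-1)})$, $\mu \in I_i$, and both endpoints are at least $10/\alpha^{1/(k-1)}$ from $\mu$. Condition on this event. Let $\rho_i$ and $\xi_i$ be the center and half-width of $I_i$; the margin will ensure (up to an absorbable constant in the bucket width of $\PDPRE$) that $|\mu - \rho_i| \leq \xi_i/2$, so Lemma~\ref{lem:one-d-trunc-mean} applies with some $\tau \leq \alpha/3$.

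\emph{Controlling the three error sources.} Let $Z = \pi_{I_i}(X)$ denote the clipped random variable ($X$ projected into $I_i$) and $\mu_i^\star = \ex{}{Z}$. By Lemma~\ref{lem:one-d-trunc-mean} applied with $\tau = \alpha/3$, $|\mu_i^\star - \mu| \leq \alpha/3$, controlling the bias. Because $\mu \in I_i$, clipping never moves a point farther from $\mu$; hence $\ex{}{(Z-\mu)^2} \leq \ex{}{(X-\mu)^2} \leq 1$, where the last step is Jensen applied to the $k$-th moment bound. A Chebyshev argument in the spirit of Lemma~\ref{lem:one-d-emp-mean-close} then yields $\left|\tfrac{1}{|Y^i|}\sum_{y\in Y^i}\pi_{I_i}(y) - \mu_i^\star\right| \leq \alpha/3$ with probability at least $0.9$, once $|Y^i| = \Omega(1/\alpha^2)$. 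Finally, the Laplace noise has scale $O(\xi_i/(\eps |Y^i|)) = O(1/(\eps \alpha^{1/(k-1)} |Y^i|))$, which is at most $\alpha/3$ with probability at least $0.9$ once $|Y^i| = \Omega(1/(\eps \alpha^{k/(k-1)}))$.

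\emph{Conclusion and main obstacle.} A union bound over the three high-probability events succeeds with probability at least $0.7$, and the triangle inequality gives $|\mu_i - \mu| \leq \alpha$. The one delicate point is aligning the margin guaranteed by $\PDPRE$ with the hypothesis of Lemma~\ref{lem:one-d-trunc-mean}: a naive calculation gives only $|\mu - \rho_i| \leq 2r$ while matching the lemma directly would want $|\mu - \rho_i| \leq \xi_i/2 = 3r/2$. This mismatch is benign and can be absorbed by a constant-factor adjustment (e.g.\ enlarging the final interval in $\PDPRE$ from $[a-2r,b+2r]$ to $[a-3r,b+3r]$, or applying Lemma~\ref{lem:one-d-trunc-mean} with $\xi$ a slightly larger constant multiple of $r$), and it does not affect the stated sample complexity in the claim.
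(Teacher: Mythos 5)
Your proof takes the same approach as the paper: decompose the error into truncation bias, sampling error of the clipped empirical mean, and Laplace noise; bound each by $\alpha/3$ with probability $\geq 0.9$ using Theorem~\ref{thm:one-d-range-pdp}, Lemma~\ref{lem:one-d-trunc-mean}, (the second-moment content of) Lemma~\ref{lem:one-d-emp-mean-close}, and Lemma~\ref{lem:lap-conc}; and union bound. You also correctly spot a real constant-factor mismatch that the paper's own proof glosses over---with $|I_i| = 6r$ and $\mu$ guaranteed only to be within $r$ of each endpoint, one gets $|\mu - \rho_i| \leq 2r$ while Lemma~\ref{lem:one-d-trunc-mean} asks for $|\mu-\rho_i|\leq\xi_i/2=1.5r$---and your proposed fixes (enlarge $I$ to $[a-3r,b+3r]$, or equivalently tune the constant $C$) are the right repair and leave the sample complexity unchanged.
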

    \begin{proof}
        We know from Theorem~\ref{thm:one-d-range-pdp}
        that with probability at least $0.9$, $\mu \in I_i$,
        such that if $I_i = [a,b]$, then
        $\mu-a,b-\mu \in \Omega\left(\tfrac{1}{\alpha^{\frac{1}{k-1}}}\right)$.
        Therefore, from Lemma~\ref{lem:one-d-trunc-mean},
        the mean of the truncated distribution (let's call it
        $\mu'_i$) will be at most $\tfrac{\alpha}{3}$ from $\mu$.
        But from Lemma~\ref{lem:one-d-emp-mean-close}, we know
        that $\abs{\mu_i-\mu'_i} \leq \tfrac{\alpha}{3}$
        with probability at least $0.9$. Finally, from
        Lemma~\ref{lem:lap-conc}, with probability at least
        $0.9$, the Laplace noise added is at most $\tfrac{\alpha}{3}$
        because we have at least
        $O\left(\tfrac{1}{\eps\alpha^{\frac{k}{k-1}}}\right)$
        samples. Therefore, by triangle inequality,
        $\abs{\mu_i-\mu} \leq \alpha$, and by the union
        bound, this happens with probability at least
        $0.7$.
    \end{proof}
    Now, by the claim above, and using Lemma~\ref{lem:chernoff-add},
    more than $\tfrac{m}{2}$ iterations should yield
    $\mu_i$ that are $\alpha$ close to $\mu$,
    which happens with probability at least $1-\beta$
    (because $m \geq O(\log(1/\beta))$).
    Therefore, the median, that is, $\wh{\mu}$ is
    at most $\alpha$ far from $\mu$ with probability
    at least $1-\beta$.
\end{proof}

\subsection{Estimating with CDP and Approximate DP}

The same algorithm could be used to get CDP
guarantees by using $\CDPRE$ instead of
Algorithm~\ref{alg:pdpre}, and using the Gaussian
Mechanism (Lemma~\ref{lem:gaussiandp}) instead
of the Laplace Mechanism (Lemma~\ref{lem:laplacedp}).
We call this algorithm $\CDPODME$.
To get approximate DP guarantees, Algorithm~\ref{alg:pdpodme},
with the exception of using $\ADPRE$ instead,
could be used, and we call this modified algorithm
$\ADPODME$.

\begin{thm}\label{thm:one-d-adp-cdp}
    Let $\cD$ be a distribution over $\R$ with
    mean $\mu\in[-R,R]$ and $k^{\text{th}}$ moment bounded
    by $1$. Then for all $\eps,\delta,\rho,\alpha,\beta > 0$,
    there exist $(\eps,\delta)$-DP and $\rho$-zCDP
    algorithms that take
    $$n_{(\eps,\delta)} O\left(
        \frac{\log(1/\beta)}{\alpha^2} +
        \frac{\log(1/\beta)}{\eps \alpha^{\frac{k}{k-1}}} +
        \frac{\log(1/\delta)\log(1/\beta)}{\eps}\right)$$
    and
    $$n_{\rho} \geq O\left(
        \frac{\log(1/\beta)}{\alpha^2} +
        \frac{\log(1/\beta)}{\sqrt{\rho} \alpha^{\frac{k}{k-1}}} +
        \frac{\sqrt{\log(R)}\log(1/\beta)}{\sqrt{\rho}}\right)$$
    samples from $\cD$ respectively,
    and output $\wh{\mu} \in \R$,
    such that with probability at least $1-\beta$,
    $$\abs{\mu-\wh{\mu}} \leq \alpha.$$
\end{thm}

\subsection{Lower Bound}

Now, we prove a lower bound that matches our
upper bound, showing that our upper bounds are
optimal.

\begin{thm}\label{thm:one-d-pdp-lb}
    Let $\cD$ be a distribution with mean $\mu\in(-1,1)$
    and $k^{\text{th}}$ moment bounded by $1$.
    Then given $\eps, \delta, \alpha > 0$, any $(\eps, \delta)$-DP
    algorithm takes
    $$n \geq \Omega\left(\frac{1}{\eps\alpha^{\frac{k}{k-1}}}\right)$$
    samples to estimate $\mu$ to within $\alpha$
    absolute error with constant probability.
\end{thm}
\begin{proof}
    We construct two dstributions that are ``close'',
    and show that any $(\eps, \delta)$-DP algorithm that
    distinguishes between them requires a large
    number of samples.
    \begin{clm}
        Let $\eps,\delta,\alpha>0$, and $\cD_1$, $\cD_2$
        be two distributions on $\R$ defined as follows.
        \[\cD_1 \equiv \pr{X\sim\cD_1}{X = 0} = 1\]
        \[
            \cD_2 \equiv
            \begin{cases}
                X = 0 &\text{ with probability } 1-p\\
                X = \tau &\text{ with probability } p
            \end{cases}
        \]
        Where in the above, $\tau > 0$, $p\tau = \alpha$
        and $\alpha^{\frac{k}{k-1}} \leq p \leq \tfrac{1}{\alpha^{\frac{k}{k-1}}}$.
        Then the following holds.
        \begin{enumerate}
            \item $\ex{X\sim\cD_2}{\abs{X-p\tau}^k} \leq 1$
            \item Any $(\eps,\delta)$-DP algorithm that can distinguish
                between $\cD_1$ and $\cD_2$ with constant
                probability requires at least
                $\tfrac{1}{\eps\alpha^{\frac{k}{k-1}}}$
                samples.
        \end{enumerate}
    \end{clm}
    \begin{proof}
        For the first part, note that $\ex{X\sim\cD_2}{X} = p\tau$.
        Then we have the following.
        \begin{align*}
            \ex{X\sim\cD_2}{\abs{X-p\tau}^k} &= p\abs{\tau-p\tau}^k
                    + (1-p)\abs{p\tau}^k\\
                &= p(1-p)\tau^k\left((1-p)^{k-1} + p^k\right)\\
                &\leq p\tau^k\\
                &\leq 1. \tag{Using our restrictions on $p,\tau$}
        \end{align*}
        Now, for the second part, we know that
        $$\abs{\ex{X\sim\cD_1}{X}-\ex{X\sim\cD_2}{X}} = \alpha.$$
        Suppose we take $n$ samples each from
        $\cD_1$ and $\cD_2$. Then by Theorem~11
        of \cite{AcharyaSZ18}, we get that
        \begin{align*}
            pn &\in \Omega\left(\frac{1}{\eps}\right)\\
            \implies n &\in \Omega\left(\frac{1}{\eps\alpha^{\frac{k}{k-1}}}\right).
        \end{align*}
        We conclude by using the equivalence of pure and approximate DP for testing problems (e.g., Lemma 5 of~\cite{AcharyaSZ18}).
    \end{proof}
    Finally, since being able to learn to within
    $\alpha$ absolute error implies distinguishing
    two distributions that are at least $2\alpha$
    apart, from the above claim, the lemma holds.
\end{proof}

\section{Estimating in High Dimensions with CDP} \label{sec:high-d-cdp}

In this section, we give a computationally
efficient, $\rho$-zCDP algorithm for estimaing
the mean of a distribution with bounded $k^{\text{th}}$
moment. The analogous $(\eps,\delta)$-DP algorithm
would be the same, with the same analysis, and
we state the theorem for it at the end.

\begin{remark}
    Throughout the section, we assume that
    the dimension $d$
    is greater than some absolute constant ($32\ln(4)$).
    If it is less than that, then we can just
    use our one-dimensional estimator from
    Theorem~\ref{thm:one-d-adp-cdp} multiple
    times to individually estimate each coordinate
    with constant multiplicative overhead in
    sample complexity.
\end{remark}

\subsection{Technical Lemmata}

Similar to our one-dimensional distribution
estimator, the idea is to aggressively truncate
the distribution around a point, and compute
the noisy empirical mean. We first have to
define what truncation in high dimensions
means. Unlike the one-dimensional case, we
define truncation by throwing out points that
do not lie inside a particular ball. We would
then output the noisy empirical mean of the
remaining points. We prove its closeness to
a second definition of truncation, whose bias
is low (as we will show later).

\begin{defn}\label{def:high-d-trunc}
    Let $\rho,x \in \R^d$, and $r > 0$. Then
    we define $\trunc(\rho,r,x)$ as follows.
    \[
        \trunc(\rho,r,x,\mu) =
        \begin{cases}
            x & \text{if } \llnorm{\rho-x} \leq r\\
            \mu
                & \text{if } \llnorm{\rho-x} > r
        \end{cases}
    \]
    Similarly, for a dataset
    $S = (X_1,\dots,X_n) \in \R^{n \times d}$,
    we define $\trunc(\rho,r,S,\mu)$ as the dataset
    $S' = (X'_1,\dots,X'_n)$, where for each
    $1 \leq i \leq n$, $X'_i = \trunc(\rho,r,X_i,\mu)$.
\end{defn}

Note that the above definition would set truncated points
to the equal to the (unknown) mean $\mu$ of the distribution. 
Since $\mu$ is unknown, an algorithm could not actually
implement this truncation. Instead, we use this as a
thought-experiment and relate this truncation to the one
where points outside the ball are simply discarded. The
reason to choose this route is that the former process,
though impossible to implement, is easy to analyse,
and it is also a simple task to relate it to the latter
process.

Now, we show that the mean of the distribution,
truncated to within a large-enough ball centered
close to the mean according to Definition~\ref{def:high-d-trunc},
does not move too far from the original mean.
Our lemmas are somewhat similar to techniques in~\cite{DiakonikolasKKLMS17} (see Lemma A.18), but their results are specific to bounded second moments ($k=2$), while we focus on the case of general $k$.

\begin{lem}\label{lem:high-d-trunc-mean}
    Let $\cD$ be a distribution over $\R^d$ with
    mean $\mu$, and $k^{\text{th}}$ moment bounded
    by $1$, where $k \geq 2$. Let
    $\rho \in \R^d$, $0 < \tau < \tfrac{1}{16}$,
    and $\xi = \tfrac{C\sqrt{d}}{\tau^{\frac{1}{k-1}}}$ for a
    constant $C > 2$.
    Let $X \sim \cD$, and $Z$ be the following random
    variable.
    \[
        Z = \trunc(\rho,\xi,X,\mu)
    \]
    If $\llnorm{\mu - \rho} \leq \tfrac{\xi}{2}$,
    then $\llnorm{\mu-\ex{}{Z}} \leq \tau$.
\end{lem}
\begin{proof}
    By self-duality of the Euclidean norm, it is sufficient to prove that for each
    unit vector $v \in \R^d$,
    $\abs{\iprod{\mu-\ex{}{Z},v}} \leq \tau$.
    Let $\gamma = \ex{}{Z}$. Then we have the
    following.
    \begin{align*}
        \abs{\iprod{\mu-\gamma,v}} &=
                \abs{\iprod{\ex{}{X-\trunc(\rho,\xi,X,\mu)},v}} \tag{Linearity of expectations}\\
            &= \abs{\ex{}{\iprod{X-\trunc(\rho,\xi,X,\mu),v}}}\\
            &= \abs{\ex{}{\iprod{X-\trunc(\rho,\xi,X,\mu),v}\ONE_{X \not\in\ball{\rho}{\xi}}}}\\
            &\leq \ex{}{\abs{\iprod{X-\trunc(\rho,\xi,X,\mu),v}}\ONE_{X \not\in\ball{\rho}{\xi}}}
                \tag{Lemma~\ref{lem:jensen}}\\
            &= \ex{}{\abs{\iprod{X-\mu,v}}\ONE_{X \not\in\ball{\rho}{\xi}}}\\
            &\leq \left(\ex{}{\abs{\iprod{X-\mu,v}}^k}\right)^{\frac{1}{k}}
                \left(\ex{}{\ONE_{X \not\in\ball{\rho}{\xi}}}\right)^{\frac{k-1}{k}}
                \tag{Lemmata~\ref{lem:jensen} and \ref{lem:holder}}\\
            &\leq 1\cdot
                \left(\pr{}{\llnorm{X-\mu} > \frac{\xi}{2}}\right)^{\frac{k-1}{k}}\\
            &\leq \tau. \tag{Lemma~\ref{lem:chebyshev-high-d}}
    \end{align*}
\end{proof}

Our next lemma shows that the empirical mean
of a set of samples from a high-dimensional
distribution with bounded $k^{\text{th}}$
moment is close to the mean of the distribution.

\begin{lem}\label{lem:high-d-emp-mean-close}
    Let $\cD$ be a distribution over $\R^d$ with
    mean $\mu$ and $k^{\text{th}}$ moment bounded
    by $1$. Suppose for $\tau>0$, $(X_1,\dots,X_n)$
    are samples from $\cD$, where
    $$n \geq O\left(\frac{d}{\tau^2}\right).$$
    Then with probability at least $0.99$,
    $$\llnorm{\frac{1}{n}\sum\limits_{i=1}^{n}{X_i}-\mu} \leq \tau.$$
\end{lem}
\begin{proof}
    Suppose $\Sigma$ is the covariance matrix of
    $\cD$. We know that $\llnorm{\Sigma} = 1$.
    Let $\wb{\mu} = \sum\limits_{i=1}^{n}{X_i}$.
    Then
    $$\ex{}{\wb{\mu}} = \mu~~~\text{and}~~~
        \ex{}{(\wb{\mu}-\mu)^T(\wb{\mu}-\mu)} = \frac{1}{n}\Sigma.$$
    Using Lemma~\ref{lem:chebyshev-high-d}, we have
    the following.
    \begin{align*}
        \pr{}{\llnorm{\wb{\mu}-\mu} > \tau} &\leq
                \left(\sqrt{\frac{d}{n\tau^2}}\right)\\
            &\leq 0.99.
    \end{align*}
\end{proof}

The last lemma shows that the distance between
the empirical means of points truncated according
to the above two schemes is small.

\begin{lem}\label{lem:high-d-schemes-close}
    Let $\cD$ be a distribution over $\R^d$ with mean
    $\mu$ and $k^{\text{th}}$ moment bounded by $1$.
    Suppose for $0<\tau<\tfrac{1}{16}$, $X=(X_1,\dots,X_n)$
    are points from $\cD$, where
    $$n \geq O\left(\frac{d}{\tau^2}\right).$$
    Let $\xi=\tfrac{C\sqrt{d}}{\tau^{\frac{1}{k-1}}}$ for $C>2$,
    and $Z=\trunc(\rho,\xi,X,\mu)$. Suppose $Y$ is the
    dataset, such that, $x \in X$ lies in $Y$ if $x \in \ball{\rho}{\xi}$.
    Then with probability at least $0.95$,
    the empirical mean of $Z$ is at most $\tau$ away
    from the empirical mean of $Y$ in $\ell_2$ norm.
\end{lem}
\begin{proof}
    First, we show that $\abs{Y} = \lambda n$ with
    probability at least $0.96$, where $\lambda \geq 0.5$.
    Using Lemma~\ref{lem:chebyshev-high-d},
    we know that with probability at least $1-\tau^{\frac{k}{k-1}}$,
    a point from $\cD$ lies in $\ball{\rho}{\xi}$.
    Now, using Lemma~\ref{lem:chernoff-add}, we have
    that the number of points in $Y$ is at least
    $0.5n$ with probability $0.96$ due to our bound on $n$.

    Let $B=\ball{\rho}{\xi}$. We have the following.
    \begin{align*}
        \llnorm{\frac{1}{m}\sum\limits_{y \in Y}{y} - \frac{1}{n}\sum\limits_{z\in Z}{z}}
            &= \llnorm{\frac{1}{m}\sum\limits_{y \in B\cap X}{y} -
                \frac{1}{n}\sum\limits_{z\in Z}{z}}\\
            &= \llnorm{\frac{1}{m}\sum\limits_{y \in B\cap X}{(y-\mu)} -
                \frac{1}{n}\sum\limits_{z\in Z}{(z-\mu)}}\\
            &= \llnorm{\frac{1}{m}\sum\limits_{z \in Z}{(z-\mu)} -
                \frac{1}{n}\sum\limits_{z\in Z}{(z-\mu)}}
                \tag{$z-\mu = \vec{0}$ if $z \not\in B$}\\
            &= \frac{n-m}{mn}\cdot \sum\limits_{z \in Z}{(z-\mu)}\\
            &= \frac{1-\lambda}{\lambda}\cdot \frac{1}{n}\sum\limits_{z \in Z}{(z-\mu)}\\
            &\leq 2\tau
                \tag{Lemmata~\ref{lem:high-d-trunc-mean} and \ref{lem:high-d-emp-mean-close}}
    \end{align*}
    In the last inequality $\tfrac{1-\lambda}{\lambda}\leq 1$
    because $\lambda \geq 0.5$.
    Therefore, by rescaling by a constant factor, and
    applying the union bound, we get the required result.
\end{proof}

\subsection{The Algorithm}

We finally state the main theorem of the section
here. Algorithm~\ref{alg:cdphdme} first computes
a rough estimate of the mean using the one-dimensional
mean estimator from Theorem~\ref{thm:one-d-adp-cdp}
that lies at most $\sqrt{d}$ from $\mu$. Then it
truncates the distribution to within a small
ball around the estimate, and uses the Gaussian
Mechanism (Lemma~\ref{lem:gaussiandp}) to output
a private empirical mean.

\begin{algorithm}[h!] \label{alg:cdphdme}
\caption{CDP High-Dimensional Mean Estimator
    $\CDPHDME_{\rho, \alpha, R}(X)$}
\KwIn{Samples $X_1,\dots,X_{2n} \in \R^{d}$.
    Parameters $\rho, \alpha, R > 0$.}
\KwOut{$\wh{\mu} \in \R^d$.}

Set parameters:
$Y \gets (X_1,\dots,X_n)$ \qquad $Z \gets (X_{n+1},\dots,X_{2n})
    \qquad r \gets \tfrac{4\sqrt{d}}{\alpha^{\frac{1}{k-1}}}$

\tcp{Obtain a rough estimate of the mean via
    coordinate-wise estimation}
\For{$i \gets 1,\dots,d$}{
    $c_i \gets \CDPODME_{\tfrac{\rho}{d},1,\tfrac{0.1}{d},R}(Y^{i})$
}
Let $\vec{c} \gets (c_1,\dots,c_d)$

\tcp{Truncate to within a small ball around the mean}
Let $Z'' \gets \emptyset$\\
\For{$z \in Z$}{
    \If{$z \in \ball{\vec{c}}{r}$}{
        $Z'' \gets Z'' \cup \{z\}$
    }
}

\tcp{Recenter the data points}
Let $Z' \gets \emptyset$\\
\For{$z \in Z''$}{
    $Z' \gets Z' \cup \{z-\vec{c}\}$
}

\tcp{Estimate the mean}
Let $\ell \gets \max\{\abs{Z'},\tfrac{3n}{4}\}$\\
$\wh{\mu} \gets \tfrac{1}{\ell}\sum\limits_{z \in Z'}{z}
    + \cN\left(\vec{0},\tfrac{32r^2}{9\rho n^2}\id_{d \times d}\right)$

\Return $\wh{\mu}+\vec{c}$
\end{algorithm}

\begin{thm}\label{thm:high-d-cdp}
    Let $\cD$ be a distribution over $\R^d$ with
    mean $\mu\in\ball{\vec{0}}{R}$ and $k^{\text{th}}$ moment bounded
    by $1$. Then for all $\rho,\alpha > 0$,
    there exists a polynomial-time, $\rho$-zCDP
    algorithm that takes
    $$n \geq O\left(
        \frac{d}{\alpha^2} +
        \frac{d}{\sqrt{\rho}\alpha^{\frac{k}{k-1}}} +
        \frac{\sqrt{d\log(R)}\log(d)}{\sqrt{\rho}}
        \right)$$
    samples from $\cD$, and outputs $\wh{\mu} \in \R^d$,
    such that with probability at least $0.7$,
    $$\llnorm{\mu-\wh{\mu}} \leq \alpha.$$
\end{thm}
\begin{proof}
    The proofs of privacy and accuracy (for
    Algorithm~\ref{alg:cdphdme}) are separated
    again as follows.

    \noindent \textbf{Privacy:}\\
    Using Lemmata~\ref{thm:one-d-adp-cdp} and \ref{lem:composition},
    all calls to $\CDPODME$ are together $\rho$-zCDP.
    
    Now, we have to bound the privacy of the estimation
    step. Because we throw out points that do not lie within
    $\ball{\vec{c}}{r}$, changing one data point in $Z$
    can reduce the number of points in $Z'$ by $1$, increase
    it by $1$, or let it remain the same. The first two cases
    are symmetric, so we will only bound the sensitivity in the
    first and the third ones.

    The first case has itself has two sub-cases: $\ell > \tfrac{3n}{4}$;
    and $\ell = \tfrac{3n}{4}$. Let $W$ be the neighbouring
    dataset of $Z$, such that the corresponding $W'$ has one point
    fewer than $Z'$. Suppose $Z_\ell$ is the point in $Z'$
    that is missing from $W'$. Then we have the following
    in the first sub-case.
    \begin{align*}
        \llnorm{\frac{1}{\ell}\sum\limits_{z \in Z'}{z} -
                \frac{1}{\ell-1}\sum\limits_{w \in W'}{w}}
                &= \llnorm{\frac{(\ell-1)\sum\limits_{z \in Z'}{z} -
                \ell\sum\limits_{w \in W'}{w}}{(\ell-1)\ell}}\\
            &= \llnorm{\frac{(\ell-1)Z_\ell-\sum\limits_{w \in W'}{w}}{(\ell-1)\ell}}\\
            &= \llnorm{\frac{Z_\ell}{\ell}-\frac{1}{(\ell-1)\ell}\sum\limits_{w\in W'}{w}}\\
            &\leq \llnorm{\frac{Z_\ell}{\ell}}+
                \llnorm{\frac{1}{(\ell-1)\ell}\sum\limits_{w\in W'}{w}}\\
            &\leq \frac{r}{\ell} + \frac{(\ell-1)r}{(\ell-1)\ell}\\
            &\leq \frac{8r}{3n}
    \end{align*}
    In the second sub-case, we have the following.
    \begin{align*}
        \llnorm{\frac{4}{3n}\sum\limits_{z \in Z'}{z} -
                \frac{4}{3n}\sum\limits_{w \in W'}{w}}
                &= \frac{4}{3n}\llnorm{Z_\ell}\\
            &\leq \frac{4r}{3n}
    \end{align*}
    Therefore, by Lemma~\ref{lem:gaussiandp}, this step is
    $\rho$-zCDP. This completes the first case.

    The third case also has two sub-cases: the
    replaced point and the replacement are in $Z'$; and
    both these points are outside $Z'$. In the latter,
    the replaced point has no effect on the empirical mean,
    therefore, the sensitivity is $0$. In the former, the
    replaced point can move the empirical mean by at most
    $\tfrac{2r}{\ell} \geq \tfrac{8r}{3n}$. Therefore,
    by Lemma~\ref{lem:gaussiandp}, this step is $\rho$-zCDP.

    The final privacy guarantee follows from Lemma~\ref{lem:composition}.

    \noindent \textbf{Accuracy:}\\
    The first step finds a centre $c_i$ for each
    coordinate $i$, such that the $i^{\text{th}}$
    coordinate of the mean is at most $1$ far from
    $c_i$ in absolute distance. Therefore, $\vec{c}$
    is at most $\sqrt{d}$ away from $\mu$. By
    Theorem~\ref{thm:one-d-adp-cdp}, this happens
    with probability at least $0.9$.

    Now, by Lemma~\ref{lem:high-d-trunc-mean},
    the mean of the distribution truncated with respect to
    $\trunc$ around
    $\vec{c}$ (that we call $\mu'$) is at most
    $\alpha$ far from $\mu$ is $\ell_2$ distance.
    Therefore, by Lemmata~\ref{lem:high-d-emp-mean-close}
    and \ref{lem:high-d-schemes-close},
    the empirical mean of the truncated distribution
    with respect to the scheme used in the algorithm
    ($\wb{\mu}$) will be at most $2\alpha$ far from
    $\mu'$ with probability at least $0.9$.

    Finally, let $z=(z_1,\dots,z_d)$ be the noise
    vector added in the estimation step.
    and let
    $S_z = \sum\limits_{i \in [d]}{z_i}$. Then
    since $d \geq 32\ln(4)$, by Lemma~\ref{lem:gauss-conc},
    we have the following.
    \begin{align*}
        \pr{}{\abs{\sum\limits_{i=1}^{d}{z_i^2} -
            \frac{32dr^2}{9\rho n^2}} \geq 0.5\times\frac{32dr^2}{9\rho n^2}} \leq 0.1.
    \end{align*}
    Therefore, it is enough to have the following.
    \begin{align*}
        \frac{16dr^2}{3\rho n^2} &\leq \alpha^2\\
        \iff n &\geq \sqrt{\frac{256d^2}{3\rho\alpha^{\frac{2k}{k-1}}}}\\
            &= \frac{16d}{\sqrt{3\rho}\alpha^{\frac{k}{k-1}}}.
    \end{align*}
    This is what we required in our sample complexity.
    Hence, by the union bound and rescaling
    $\alpha$ by a constant, we get the required
    result.
\end{proof}

To get the analogous $(\eps,\delta)$-DP algorithm,
we just use $\ADPODME$ instead of $\CDPODME$ in
the first step, and keep the rest the same.

\begin{thm}\label{thm:high-d-adp}
    Let $\cD$ be a distribution over $\R^d$ with
    mean $\mu\in\ball{\vec{0}}{R}$ and $k^{\text{th}}$ moment bounded
    by $1$. Then for all $\eps,\delta,\alpha > 0$,
    there exists a polynomial-time, $(\eps,\delta)$-DP
    algorithm that takes
    $$n \geq O\left(
        \frac{d}{\alpha^2} +
        \frac{d\sqrt{\log(1/\delta)}}{\eps\alpha^{\frac{k}{k-1}}} +
        \frac{\sqrt{d\log(1/\delta)}\log(d)}{\eps}
        \right)$$
    samples from $\cD$, and outputs $\wh{\mu} \in \R^d$,
    such that with probability at least $0.7$,
    $$\llnorm{\mu-\wh{\mu}} \leq \alpha.$$
\end{thm}

\section{Estimating in High Dimensions with Pure DP} \label{sec:high-d-pdp}

We prove an upper bound for mean estimation in
case of high-dimensional distributions with bounded
$k^{\text{th}}$ moment, whilst having pure DP
guarantee for our algorithm. It involves creating
a cover over $[R,R]^d$, and using the Exponential
Mechanism (Lemma~\ref{lem:exp-mechanism}) for selecting
a point that would be a good estimate for the mean
with high probability. Note that while this algorithm
achieves sample complexity that is linear in the
dimension, it is computationally inefficient.

\subsection{Technical Lemma}

We start by stating two lemmata that would be
used in the proof of accuracy of our proposed
algorithm. Unlike Lemma~\ref{lem:one-d-trunc-mean},
the first lemma says that if we truncate a
one-dimensional distribution with bounded
$k^{\text{th}}$ moment around a point that
is far from the mean, then the mean of this
truncated distribution would be far from the
said point.

\begin{lem}\label{lem:one-d-trunc-mean2}
    Let $\cD$ be a distribution over $\R$ with
    mean $\mu$, and $k^{\text{th}}$ moment bounded
    by $1$. Let $\rho \in \R$, $0 < \tau < \tfrac{1}{16}$,
    and $\xi = \tfrac{C}{\tau^{\frac{1}{k-1}}}$ for a
    universal constant $C$.
    Let $X \sim \cD$, and $Z$ be the following random
    variable.
    \[
    Z =
    \begin{cases}
        \rho - \xi & \text{if $X < \rho - \xi$}\\
        X & \text{if $\rho - \xi \leq X \leq \rho + \xi$}\\
        \rho + \xi & \text{if $X > \rho + \xi$}
    \end{cases}
    \]
    If $\rho > \mu + \tfrac{\xi}{2}$, then
    the following holds.
    \[
        \ex{}{Z} \in
        \begin{cases}
            \left[\mu-\frac{\xi}{8},\mu+\frac{\xi}{8}\right] &
                \text{if $\frac{\xi}{2} < \abs{\rho - \mu} \leq \frac{17\xi}{16}$}\\
            \left[\rho-\xi,\rho-\frac{15\xi}{16}\right] &
                \text{if $\abs{\rho - \mu} > \frac{17\xi}{16}$}
        \end{cases}
    \]
    If $\rho < \mu - \tfrac{\xi}{2}$, then
    the following holds.
    \[
        \ex{}{Z} \in
        \begin{cases}
            \left[\mu-\frac{\xi}{8},\mu+\frac{\xi}{8}\right] &
                \text{if $\frac{\xi}{2} < \abs{\rho - \mu} \leq \frac{17\xi}{16}$}\\
            \left[\rho+\frac{15\xi}{16},\rho+\xi\right] &
                \text{if $\abs{\rho - \mu} > \frac{17\xi}{16}$}
        \end{cases}
    \]
\end{lem}
\begin{proof}
    Without loss of generality, we assume that $\rho \geq \mu$,
    since the argument for the other case is symmetric.
    Let $a = \max\left\{\mu + \tfrac{\xi}{16}, \rho - \xi\right\}$,
    $b = \rho + \xi$, and
    $q = \pr{}{\abs{X-\mu} > \abs{a-\mu}}$.
    Then the highest value $\ex{}{Z}$ can achieve
    is when $1-q$ probability mass is at $a$ and
    $q$ of the mass is at $b$. This is because the
    mass that lies beyond $a$ is $q$, and
    $b$ is the highest value that $X$ can take
    because of truncation. We get the following:
    \begin{align}
        \nonumber \ex{}{Z} &\leq (1-q)a + qb\\
            &= a + q(b-a). \label{eq:one-d-trunc-mean2}
    \end{align}
    Now, there are two cases. First, when $a = \mu + \tfrac{\xi}{16}$,
    and second, when $a = \rho - \xi$. In the first
    case, since $\mu+\tfrac{\xi}{16} \geq \rho-\xi$,
    it must be the case that $\rho-\mu \leq \tfrac{17\xi}{16}$.
    So, we have the following from (\ref{eq:one-d-trunc-mean2}).
    \begin{align*}
        \ex{}{Z} &\leq \mu + \frac{\xi}{16} +
                q\left(\rho+\xi-\mu-\frac{\xi}{16}\right)\\
            &= \mu + \frac{\xi}{16} + q(\rho-\mu) +
                \frac{15q\xi}{16}\\
            &\leq \mu + \frac{\xi}{16} + 2q\xi.
    \end{align*}
    From Lemma~\ref{lem:chebyshev}, we know that
    $$q=\pr{}{\abs{X-\mu} > \frac{\xi}{16}} \leq
        \left(\frac{16}{\xi}\right)^{k} =
        \left(\frac{16}{C}\right)^{k}\tau^{\frac{k}{k-1}}.$$
    This, along with our restrictions on $C$ and
    $\tau$, gives us,
    $$\ex{}{Z} \leq \mu + \frac{\xi}{8}.$$
    We now have to show that $\ex{}{Z} \geq \mu - \tfrac{\xi}{8}$.
    We have the following:
    \begin{align*}
        \ex{}{Z} &\geq (1-q)(\mu-\frac{\xi}{16}) + q(\rho - \xi)\\
            &= \mu - \frac{\xi}{16} + q\left(\frac{\xi}{16} - \xi + \rho - \mu\right)\\
            &= \mu - \frac{\xi}{16} - \frac{15q\xi}{16} + q(\rho - \mu)\\
            &\geq \mu - \frac{\xi}{8}.
    \end{align*}

    \noindent For the second case, we have the following
    from (\ref{eq:one-d-trunc-mean2}):
    \begin{align*}
        \ex{}{Z} &\leq \rho - \xi + q(\rho+\xi-\rho+\xi)\\
            &= \rho - \xi + 2q\xi.
    \end{align*}
    From Lemma~\ref{lem:chebyshev}, we know that
    $$q=\pr{}{\abs{X-\mu} > \abs{\rho-\xi-\mu}} \leq
        \pr{}{\abs{X-\mu} > \frac{\xi}{16}} \leq
        \left(\frac{16}{\xi}\right)^{k} =
        \left(\frac{16}{C}\right)^{k}\tau^{\frac{k}{k-1}}.$$
    This gives us,
    $$\ex{}{Z} \leq \rho - \frac{15\xi}{16}.$$
    Now it is trivial to see that $\ex{}{Z} \ge \rho - \xi$
    because the minimum value $Z$ can take is $\rho - \xi$.
\end{proof}

The guarantees of the next lemma are similar
to Lemma~\ref{lem:one-d-emp-mean-close},
except that this one promises a much higher
correctness probability. It is just the
well-known median of means estimator, which
is adapted for the case of distributions
with bounded $k^{\text{th}}$ moment.

\begin{lem}[Median of Means]\label{lem:moms}
    Let $\cD$ be a distribution over $\R$ with
    mean $\mu$ and $k^{\text{th}}$ moment bounded
    by $1$. For $0 < \beta < 1$, and $\alpha > 0$,
    suppose $(X_1,\dots,X_n)$ are independent
    samples from $\cD$, such that
    $$n \geq O\left(\frac{\log(1/\beta)}{\alpha^2}\right).$$
    Let $m \geq 200\log(2/\beta)$. For $i = 1,\dots,m$, suppose
    $Y^i = \left(X_{(i-1)m+1},\dots,X_{(i-1)m+\tfrac{n}{m}}\right)$,
    and $\mu_i$ is the empirical mean of $Y^i$.
    Define $\wb{\mu} = \Median(\mu_1,\dots,\mu_m)$.
    Then with probability at least $1-\beta$,
    $$\abs{\wb{\mu} - \mu} \leq \alpha.$$
\end{lem}
\begin{proof}
    Since $\tfrac{n}{m} \geq n_k$, by
    Lemma~\ref{lem:one-d-emp-mean-close}, with
    probability at least $0.9$, for a fixed $i$,
    $\abs{\mu_i-\mu} \leq \alpha$. Now, because
    $m \geq 200\log(2/\beta)$, using Lemma~\ref{lem:chernoff-add},
    we have that with probability at least $1-\beta$,
    at least $0.8m$ of the $\mu_i$'s are at most
    $\alpha$ far from $\mu$. Therefore, their median
    has to be at most $\alpha$ far from $\mu$.
\end{proof}

\subsection{The Algorithm}

Our algorithm for estimating the mean with pure
differential privacy is a so-called, ``cover-based
algorithm''. It creates a net of points, some of which
could be used to get a good approximation for the
mean, then privately choses one of those points
with high probability. This is done by assigning
a ``score'' to each point in the net, which depends
on the dataset, such that it ensures that the
points which are good, have significantly higher
scores than the bad points. Privacy comes in for
the part of selecting a point because the score
is directly linked with the dataset. So we use
the Exponential Mechanism (Lemma~\ref{lem:exp-mechanism})
for this purpose.

This framework is reminiscent of the classic approach of density estimation via Scheff\'e estimators (see, e.g.,~\cite{DevroyeL01}), which has recently been privatized in~\cite{BunKSW19}.
In particular, in a similar way, we choose from the cover by setting up several pairwise comparisons, and privatize using the exponential mechanism in the same way as~\cite{BunKSW19}.
This is where the similarities end: the method for performing a comparison between two elements is quite different, and the application is novel (mean estimation versus density estimation).
We adopt this style of pairwise comparisons to reduce the problem from $d$ dimensions to $2$ dimensions: indeed, certain aspects of pure differential privacy are not well understood in high-dimensional settings, so this provides a new tool to get around this roadblock.
To the best of our knowledge, we are the first to use pairwise comparisons for a statistical estimation task besides general density estimation. 
Most cover-based arguments for other tasks instead appeal to uniform convergence, which is not clear how to apply in this setting when we must preserve privacy.

The first task is to come up with a good $\Score$
function. It means that it should satisfy two
properties. First: the points $O(\alpha)$ close
to $\mu$ should have very high scores, but the
ones further than that must have very low scores.
Second: the function should have low sensitivity
so that we don't end up selecting a point with
low utility ($\Score$). We create a required
function, which is based on games or ``matches''
between pairs of points, as defined below.

\begin{defn}[Match between Two Points]\label{def:match}
    Let $X^1,\dots,X^m \in \R^{n \times d}$ be
    datasets, $X$ be their concatenation, and $p,q$
    be points in $\R^d$, and $\xi > 0$. Suppose
    $Y^1,\dots,Y^m$ are the respective datasets
    after projecting their points on to the line
    $p-q$ and truncating to within $\ball{p}{\xi}$,
    and $\mu_1,\dots,\mu_m$ are the respective
    empirical means of $Y_i$'s. Let
    $\mu' = \Median(\mu_1,\dots,\mu_m)$. Then we
    define the function $\Match_{X,\xi}:\R^d\times\R^d\rightarrow\R$
    as follows.
    \[
        \Match_{X,\xi}(p,q) =
        \begin{cases}
            \Tie & \text{if $\llnorm{p-q} \leq 20 \alpha$}\\
            \Win & \text{if $\llnorm{p-\mu'} < \llnorm{q-\mu'}$}\\
            \Lose & \text{if $\llnorm{p-\mu'} \geq \llnorm{q-\mu'}$}\\
        \end{cases}
    \]
\end{defn}
\noindent Note that the above definition is not
symmetric.

\begin{defn}[$\Score$ of a Point]
    Let $X^1,\dots,X^m \in \R^{n \times d}$ be
    datasets, $X$ be their concatenation, and $p$
    be a point in $\R^d$, and $\alpha,\xi > 0$. We define
    $\Score_{X,\xi,D,\alpha}(p)$ with respect to a domain
    $D \subset \R^d$ of a point $p$ to be the minimum
    number of points of $X$ that need to be changed
    to get a dataset $\wb{X}$
    so that there exists $q \in D$, such that
    $\Match_{\wb{X},\xi}(p,q) = \Lose$. If for
    all $q \in D \setminus \{p\}$ and all
    $Y \in \R^{nm \times d}$, $\Match_{Y,\xi}(p,q) \neq \Lose$,
    then we define $\Score_{Y,\xi,D,\alpha}(p) = n\alpha$.

    \noindent If the context is clear, we abbreviate the
    quantity to $\Score_X(p)$.
\end{defn}

\noindent By the above definition, if there
already exists a $q \in D$, such that
$\Match_{X,\xi}(p,q) = \Lose$, then $\Score_X(p) = 0$.
Let $S$ be the set as defined in Algorithm~\ref{alg:pdpodme}.
We start by showing that the points in $S$
that are close to $\mu$ have a high score
with high probability.

\begin{algorithm}[h!] \label{alg:pdphdme}
\caption{Pure DP High-Dimensional Mean Estimator
    $\PDPHDME_{\eps, \alpha, R}(X)$}
\KwIn{Samples $X_1,\dots,X_{2n} \in \R^{d}$.
    Parameters $\eps, \alpha, R > 0$.}
\KwOut{$\wh{\mu} \in \R^d$.}

Set parameters: $\xi \gets \tfrac{60}{\alpha^{\frac{1}{k-1}}}$
    \qquad $Y \gets (X_1,\dots,X_n)$
    \qquad $Z \gets (X_{n+1},\dots,X_{2n})$

\tcp{Reduce search space}
\For{$i \gets 1,\dots,d$}{
    $c_i \gets \PDPODME_{\frac{\eps}{d}.\alpha,R}(Y^i)$\\
    $I_i \gets [c_i-\alpha,c_i+\alpha]$\\
    $J_i\gets\{c_i-\alpha,c_i-\alpha+\tfrac{\alpha}{\sqrt{d}},
        \dots,c_i+\alpha-\frac{\alpha}{\sqrt{d}},c_i+\alpha\}$\\
}

\tcp{Find a good estimate}
Let $S \gets J_1 \times \dots \times J_d$\\
Compute $\Score_{X,\xi,S,\alpha}(p)$ with respect to $Z$ for every $p \in S$\\
Run Exponential Mechanism w.r.t. $\Score$
    (sensitivity $1$, privacy budget $\eps$) to output
    $\wh{\mu} \in S$ \\
\Return $\wh{\mu}$
\end{algorithm}

\begin{lem}\label{lem:high-d-pdp-case1}
    Let $m \geq 400d\log(8\sqrt{d}/\beta)$ be the same
    quantity as in Definition~\ref{def:match}, and
    let $S_\leq \subset S$, such that for all $p \in S_\leq$,
    $\llnorm{p-\mu} \leq 5 \alpha$. If
    $$n \geq O\left(\frac{m}{\alpha^2}\right),$$
    then
    $$\pr{}{\exists p \in S_\leq, \text{ st. }
        \Score_{X,\xi,S,\alpha}(p) < \tfrac{4n\alpha}{5\xi}} \leq \beta.$$
\end{lem}
\begin{proof}
    Fix a $p \in S_\leq$.
    First, note that $p$ cannot lose to any
    other point that is at most $5\alpha$ far
    from $\mu$. So, it can only lose or win
    against points that are at least $15\alpha$
    far from $\mu$.

    Now, let $q \in S$ be any point that is at
    least $20\alpha$ away from $p$, and let $\ell_q$
    be the line $p-q$. If we project $\mu$ on
    to $\ell_q$, the projected mean $\mu_0$ will be
    at most $5\alpha$ away from $p$ as projection
    cannot increase the distance between the projected
    point any of the points on the line. This
    implies that $\mu_0$ will be at least $15\alpha$
    far from $q$.
    
    From Lemma~\ref{lem:one-d-trunc-mean},
    we know that the mean of the distribution
    truncated around $p$ for a sufficiently
    large $\xi$ (which we call $\mu_p$) is at
    most $\alpha$ far from $\mu_0$. So, by
    Lemma~\ref{lem:moms}, we know that with
    probability at least $1-\tfrac{\beta}{\abs{S}^2}$,
    the median of means ($\mu'$) is at most $\alpha$
    far from $\mu_p$. This implies that $\mu'$
    is more than $14\alpha$ far from $q$, and
    at most $6\alpha$ far from $p$. Therefore,
    to lose to $q$, $\mu'$ will have to be moved
    by at least $4\alpha$ towards $q$.

    Since moving a point in $X$ can move $\mu'$
    by at most $\tfrac{2m\xi}{n}$, it means that
    we need to change at least $\tfrac{2n\alpha}{m\xi}$
    points each from at least $0.4m$ of the sub-datasets
    in $X$ to make $p$ lose to $q$ (from the proof
    of Lemma~\ref{lem:moms}), since we want to have
    the means of at least half of the sub-datasets
    to be closer to $q$.
    Taking the union bound over all pairs of
    points in $S$, we get the desired error
    probability bound. This proves the claim.
\end{proof}

Now, we prove that the points in $S$ that are
far from $\mu$ have a very low score.

\begin{lem}\label{lem:high-d-pdp-case2}
    Let $m \geq 400d\log(8\sqrt{d}/\beta)$ be as in Definition~\ref{def:match}, and
    let $S_> \subset S$, such that for all $p \in S_>$,
    $\llnorm{p-\mu} > 20\alpha$. If
    $$n \geq O\left(\frac{m}{\alpha^2}\right),$$
    then
    $$\pr{}{\exists p \in S_>, \text{ st. }
        \Score_{X,\xi,S,\alpha}(p) > 0} \leq \beta.$$
\end{lem}
\begin{proof}
    Fix a $p \in S_>$.
    We have to deal with two cases here. First,
    when $\llnorm{\mu - p} \leq \tfrac{\xi}{2}$,
    and when $\llnorm{\mu - p} > \tfrac{\xi}{2}$.

    For the first case,
    let $z$ be the point in $S$ that is nearest
    to $\mu$, and let $\ell_z$ be the line $p-z$.
    Suppose $\mu_1$ is the projection of $\mu$ on
    to $\ell_z$. Then $\mu_1$ will be at most
    $\alpha$ from $z$.
    By Lemma~\ref{lem:one-d-trunc-mean},
    the mean of the distribution truncated around
    $p$ (which we call $\mu_p$) will be at most
    $\alpha$ far from $\mu_1$. Then by Lemma~\ref{lem:moms},
    with probability at least $1-\tfrac{\beta}{\abs{S}^2}$,
    the median of means ($\mu'$) will be at most
    $\alpha$ far from $\mu_p$, hence, at most
    $3\alpha$ far from $z$. This implies that
    $\mu'$ will be at least $17\alpha$ far from
    $p$. Therefore, the score of $p$ will be $0$,
    since it has already lost to $z$.

    In the second case, let $\ell_\mu$ be the
    line $p-\mu$. Suppose $\mu_1$ is the mean
    of the distribution projected on to $\ell_\mu$
    and truncated around $p$, and let $z$ be
    the point in $S$ that is closest to $\mu_1$.
    If $\mu_1 = z$, then we're done because by
    Lemma~\ref{lem:one-d-trunc-mean2}, $\mu_1$
    is at least $\tfrac{15\xi}{16}$ far from
    $p$, and then by Lemma~\ref{lem:moms}, with
    probability at least $1-\tfrac{\beta}{\abs{S}^2}$,
    the median of means lies $\alpha$ close to $\mu_1$,
    and is closer to $\mu_1$ than it is to $p$.

    If not, then we have to do some more work.
    Now, let $\ell_z$ be the line $p-z$, let $\mu_z$
    be the projection of $\mu$ on to $\ell_z$,
    and let $\mu_2$ be the projection of $\mu_1$
    on to $\ell_z$. Using basic geometry, we
    have the following.
    \begin{align*}
        \frac{\llnorm{p-\mu}}{\llnorm{p-\mu_z}} &=
            \frac{\llnorm{p-\mu_1}}{\llnorm{p-\mu_2}}\\
        \iff \llnorm{p-\mu_z} &=
            \frac{\llnorm{p-\mu}\llnorm{p-\mu_2}}{\llnorm{p-\mu_1}}\\
        \implies \llnorm{p-\mu_z} &\geq
            \frac{\llnorm{p-\mu}\left(\llnorm{p-\mu_1}-\alpha\right)}
            {\llnorm{p-i\mu_1}}
            \tag{Triangle Inequality}\\
        &= \llnorm{p-\mu}\left(1-\frac{\alpha}{\llnorm{p-\mu_1}}\right)\\
        &\geq \llnorm{p-\mu}\left(1-\frac{16\alpha}{15\xi}\right)\\
        &= \llnorm{p-\mu}\left(1-\frac{16\alpha^{\frac{k}{k-1}}}{15C}\right)\\
        &\geq \frac{15\llnorm{p-\mu}}{16}
            \tag{Due to our restrictions on $C$ and $\alpha$}
    \end{align*}
    If $\llnorm{p-\mu_z} \leq \tfrac{\xi}{2}$,
    then by a similar argument as in the first
    case, $z$ wins against $p$ because the mean
    of the truncated distribution is close to
    $\mu_z$, and the empirical median of means
    is close to that mean with probability at
    least $1-\tfrac{\beta}{\abs{S}^2}$, hence,
    closer to $z$ than to $p$. If not, then by
    Lemma~\ref{lem:one-d-trunc-mean2},
    the mean of the distribution projected on to
    $\ell_z$, and truncated around $p$ will be
    at most $\tfrac{\xi}{16}$ far from $p-\xi$,
    so by Lemma~\ref{lem:moms}, the median of
    means ($\mu'$) will be at most $\alpha$ far
    from that mean with probability at least
    $1-\tfrac{\beta}{\abs{S}^2}$. This implies
    that it will be at most $\tfrac{\xi}{16} + 2\alpha$
    far from $z$, but will be at least $\frac{15\xi}{16}-2\alpha$
    from $p$, which means that $p$ will lose to
    $z$ by default.

    Taking the union bound over all 
    sources of error, and all pairs
    of points in $S$, we get the error probabiity
    of $4\beta$, which we can rescale to get the
    required bounds. 
\end{proof}

\noindent Finally, we prove that the $\Score$ function
has low sensitivity.

\begin{lem}\label{lem:score-sensitivity}
    The $\Score$ function satisfies the following:
    $$\Delta_{\Score,1} \leq 1.$$
\end{lem}
\begin{proof}
    Let $X$ be any dataset, and $p \in \R^d$
    be a point in the domain in question, and
    let $\Score_X(p)$ be the score of $p$. Suppose
    $X'$ is a neighbouring dataset of $X$. Then
    by changing a point $x$ in $X$ to $x'$ (to
    get $X'$), we can only change the score
    of $p$ by $1$. Let the median of means of
    projected, truncated $X$ be $\mu_1$, and
    that of $X'$ be $\mu_2$.

    Suppose $p$ was already losing to a point
    $q$, that is, its score was $0$. Then
    switching from $X$ to $X'$ can either imply
    that $\mu_2$ is further from $p$ than $\mu_1$
    was from $p$, or it could go further. In
    the first case, $p$ would still lose to $q$.
    In the second case, if $\mu_2$ is closer
    to $p$ than it is to $q$, then the score
    of $p$ would increase at most by $1$ because
    we can switch back to $X$ from $X'$ by
    switching one point; or $p$ could still
    be losing to $q$, in which case, the score
    wouldn't change at all.

    Now, suppose $p$ was winning against all
    points that are more than $30\alpha$ away
    from $p$ with respect to $X$. Let $q$ be
    a point that determined the score of $p$.
    If switching to $X'$ made $\mu_2$ closer
    to $p$ than $\mu_1$ was, then the score
    can only increase by $1$ because we can
    always switch back to $X$, and get the
    original score. If it moved $\mu_2$ closer
    to $q$ than $\mu_1$ was, then the score
    can only decrease by $1$. This is because
    $q$ determined $\Score_X(p)$ via some optimal
    strategy, and changing a point of $X$ cannot
    do better than that.
    Therefore, the sensitivity is $1$, as required.
\end{proof}

\noindent We can now move on to the main theorem
of the section.

\begin{thm}\label{thm:high-d-pdp}
    Let $\cD$ be a distribution over $\R^d$ with
    mean $\mu\in\ball{\vec{0}}{R}$ and $k^{\text{th}}$ moment
    bounded by $1$. Then for all $\eps,\alpha,\beta > 0$,
    there exists an $\eps$-DP algorithm that takes
    $$n \geq O\left(
        \frac{d\log(d/\beta)}{\alpha^2} +
        \frac{d\log(d/\beta)}{\eps\alpha^{\frac{k}{k-1}}} +
        \frac{d\log(R)\log(d/\beta)}{\eps}
        \right)$$
    samples from $\cD$,
    and outputs $\wh{\mu} \in \R^d$,
    such that with probability at least $1-\beta$,
    $$\llnorm{\mu-\wh{\mu}} \leq \alpha.$$
\end{thm}
\begin{proof}
    We again
    separate the proofs of privacy and accuracy
    of Algorithm~\ref{alg:pdphdme}.

    \noindent \textbf{Privacy:}\\
    The first step is $\eps$-DP from Lemma~\ref{thm:one-d-pdp},
    and from Lemma~\ref{lem:composition}.
    The second step is $\eps$-DP from
    Lemmata~\ref{lem:score-sensitivity} and
    \ref{lem:exp-mechanism}. Therefore, the
    algorithm is $2\eps$-DP.

    \noindent \textbf{Accuracy:}\\
    The first step is meant to reduce the size
    of the search space. From Lemma~\ref{thm:one-d-pdp},
    we have that for each $i$, the distance between
    $c_i$ and the mean along the $i^{\text{th}}$
    axis is at most $\alpha$. So, $I_i$ contains
    the mean with high probability, and is of
    length $2\alpha$ by construction.

    We know from Lemma~\ref{lem:exp-mechanism}
    that with high probability, the point returned
    by Exponential Mechanism has a high $\Score$.
    So, it will be enough to argue that with high
    probability, only the points in $S$, which are
    $O(\alpha)$ close to $\mu$, have a high quality
    score, while the rest have $\Score$ close to $0$.
    This exactly what we have from Lemmata~\ref{lem:high-d-pdp-case1}
    and \ref{lem:high-d-pdp-case2}. Let $\OPT_\Score(Z)$
    be the maximum score of any point in $S$.
    Then we know that $\OPT_\Score(Z) \geq \tfrac{4n\alpha}{5\xi}$,
    and that the points that have this score
    have to be at most $20\alpha$ far from $\mu$.
    From Lemma~\ref{lem:exp-mechanism}, we know
    that with probability at least $1-\beta$,
    \begin{align*}
        \Score(X,\wh{\mu}) &\geq \OPT_\Score(Z) -
                \frac{2\Delta_{\Score,1}}{\eps}(\log(\abs{S})+\log(1/\beta))\\
            &\geq \frac{4n\alpha}{5\xi} -
                \frac{2}{\eps}\left(d\log\left(4\sqrt{d}\right)+
                \log(1/\beta)\right)\\
            &\geq O\left(n\alpha^{\frac{k}{k-1}}\right).
                \tag{Because of our bounds on $n$ and $\xi$}
    \end{align*}
    Therefore, we get a point that is at most
    $20\alpha$ far from $\mu$. Rescaling $\alpha$
    and $\beta$ by constants, we get the required
    result.
\end{proof}

\section{Lower Bounds for Estimating High-Dimensional Distributions}\label{sec:lb}

\begin{theorem}
\label{thm:high-d-lb}
Suppose $\cA$ is an $(\eps,0)$-DP algorithm and $n \in \N$ is a number is such that, for every product distribution $P$ on $\R^d$ such that
$\ex{}{P} = \mu$ and $\sup_{v : \| v \|_2 = 1} \ex{}{ \langle v, P - \mu \rangle^2 } \leq 1$,
$$
\ex{X_1,\dots,X_n \sim P, \cA}{\| \cA(X) - \mu \|_2^2} \leq \alpha^2.
$$
Then $n = \Omega\left(\frac{d}{\alpha^2 \eps}\right)$.
\end{theorem}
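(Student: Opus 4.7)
The plan is a packing argument that reduces to the standard pure DP coupling lower bound. First I construct a hard family of product distributions. Let $\gamma = c \alpha / \sqrt{d}$ for a sufficiently large absolute constant $c$, and define one-dimensional distributions $\cD_0 = \delta_0$ (point mass at zero) and $\cD_1$, where $\cD_1$ places probability $p = \gamma^2$ at $\tau = 1/\gamma$ and the remaining mass at $0$. A direct computation gives $\ex{X \sim \cD_1}{X} = \gamma$, $\mathrm{Var}(\cD_1) = p(1-p)\tau^2 \leq 1$, and $\mathrm{TV}(\cD_0, \cD_1) = p = \gamma^2 = \Theta(\alpha^2/d)$. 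For each $b \in \{0,1\}^d$ let $P_b = \bigotimes_{i=1}^{d} \cD_{b_i}$; then $P_b$ is a product distribution with mean $\mu_b = \gamma \cdot b$, and for every unit $v \in \mathbb{S}^{d-1}$, $\ex{}{\iprod{v, P_b - \mu_b}^2} = \sum_i v_i^2 \mathrm{Var}(\cD_{b_i}) \leq 1$, so each $P_b$ lies in the hypothesis class of the theorem.

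Next I fix a binary error-correcting code $C \subseteq \{0,1\}^d$ with $|C| = 2^{\Omega(d)}$ and pairwise Hamming distance at least $d/4$, whose existence is guaranteed by (for instance) the Gilbert--Varshamov bound. For distinct $b, b' \in C$, $\llnorm{\mu_b - \mu_{b'}} = \gamma \sqrt{h(b,b')} \geq \gamma \sqrt{d}/2 \geq 4\alpha$ after taking $c$ large enough. Applying Markov's inequality to the hypothesized error bound $\ex{}{\llnorm{\cA(X) - \mu_b}^2} \leq \alpha^2$ yields $\pr{X \sim P_b^n, \cA}{\llnorm{\cA(X) - \mu_b} \leq 2\alpha} \geq 3/4$, so the balls $T_b = \ball{\mu_b}{2\alpha}$ are pairwise disjoint and each is hit with constant probability under the corresponding $P_b^n$.

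The final step is the standard pure DP coupling argument (of the sort used in the proof of Theorem~\ref{thm:one-d-pdp-lb}, invoking Theorem~11 of~\cite{AcharyaSZ18}): since $n$ i.i.d.\ samples from $P_b^n$ and $P_{b'}^n$ can be coupled to have expected Hamming distance at most $n \cdot \mathrm{TV}(P_b, P_{b'})$, group privacy for the $\eps$-DP algorithm $\cA$ gives $\pr{X \sim P_{b'}^n}{\cA(X) \in T_b} \geq \Omega(e^{-O(\eps n \sigma)})$, where $\sigma = \max_{b,b' \in C} \mathrm{TV}(P_b, P_{b'})$. Summing over $b \in C$ and using disjointness of the $T_b$'s forces $|C| = O(e^{O(\eps n \sigma)})$, i.e., $n = \Omega(\log|C|/(\eps \sigma))$. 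Because TV tensorizes subadditively across the $d$ coordinates, $\sigma \leq d \cdot p = O(\alpha^2)$, producing the claimed bound $n = \Omega(d/(\eps\alpha^2))$.

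The main obstacle, or rather the delicate accounting, is the simultaneous tuning of $\gamma$: it must be at least $\Omega(\alpha/\sqrt{d})$ to keep the target balls $T_b$ disjoint after applying Markov to the squared error, yet making $\gamma$ any larger only inflates the TV radius and weakens the packing bound. These two constraints meet precisely at $\gamma = \Theta(\alpha/\sqrt{d})$, which is exactly what delivers a factor of $d$ in the lower bound rather than something weaker like $\Omega(\sqrt{d}/(\eps\alpha^2))$.
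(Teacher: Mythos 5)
Your proposal is correct and takes essentially the same approach as the paper: you build the same family of Bernoulli-style product distributions indexed by a constant-rate binary code, verify the moment and mean-separation conditions, and then apply the standard pure-DP packing argument, with Markov converting the expected-squared-error bound into an identification guarantee. The only cosmetic difference is that you inline the proof of the packing lemma (the TV-coupling plus group-privacy calculation), whereas the paper encapsulates it as a standalone lemma.
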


The proof uses a standard \emph{packing argument}, which we encapsulate in the following lemma.
\begin{lem} \label{lem:packing-lemma}
Let $\mathcal{P} = \{P_1,P_2,\dots\}$ be a family of distributions such that, for every $P_i, P_j \in \mathcal{P}$, $\SD(P_i,P_j) \leq \tau.$  Suppose $\cA$ is an $(\eps,0)$-DP algorithm and $n \in \N$ is a number such that, for every $P_i \in \cP$,
$$
\pr{X_1,\dots,X_n \sim P_i, \cA}{\cA(X) = i} \geq 2/3,
$$
then $n = \Omega\left(\frac{\log|\cP|}{\tau \eps}\right)$.
\end{lem}

\begin{proof} We will define a packing as follows.  As a shorthand, define
    $Q_0 = 0$ and
\begin{equation*}
    Q_1 = 
\begin{cases}
0 & \textrm{w.p.~$1 - \frac{\alpha^2}{d}$} \\
\frac{\sqrt{d}}{\alpha} & \textrm{w.p.~$\frac{\alpha^2}{d}$}
\end{cases}
\end{equation*}
For $c \in \{0,1\}^d$, let
$$P_{c} = \bigotimes_{j = 1}^{d} Q_{c_{j}}$$
be the product of the distributions $Q_0$ and $Q_1$ where we choose each coordinate of the product based on the corresponding coordinate of $c$.

Note that $\SD(Q_0,Q_1) \leq \alpha^2/d$, and therefore, for every $c, c' \in \zo^d$, $\SD(P_c,P_{c'}) \leq \alpha^2$.  Let $\cC \subseteq \{0,1\}^d$ be a code of relative distance $1/4$.   That is, every distinct $c, c' \in \cC$ differ on at least $d/4$ coordinates. By standard information-theoretic arguments, there exists such a code such that $|\cC| = 2^{\Omega(d)}$.  We will define the packing to be $\cP = \{ \cP_{c} \}_{c \in \cC}$.  By Lemma~\ref{lem:packing-lemma}, if there is an $(\eps,0)$-DP algorithm $\cA$ that takes $n$ samples from an arbitrary one of the distribution $P_{c} \in \cP$ and correctly identifies $P_{c}$ with probability at least $2/3$, then $n = \Omega\left( \frac{d}{\alpha^2 \eps} \right).$

We make two more observation about the distributions in $\cP$.  First, since this is a product distribution, its 2nd moment is bounded by the maximum 2nd moment of any coordinate, so
$$
\sup_{v : \| v \|_2 = 1} \ex{}{ \langle v, P - \mu \rangle^2 } 
\leq \max\left\{ \var{}{Q_0}, \var{}{Q_1} \right\} \leq 1.
$$
Second, since any distinct $c,c'$ differ on $d/4$ coordinates, and $\ex{}{Q_1 - Q_0} = \alpha/\sqrt{d}$, we have that for every distinct $c,c'$,
$$
\| \ex{}{P_{c} - P_{c'}} \|_{2} \geq \sqrt{\frac{d}{4}} \cdot \frac{\alpha}{\sqrt{d}} = \frac{\alpha}{2}.
$$

By a standard packing argument, any $(\eps,0)$-DP algorithm that takes $n$ samples from $P_c$ for an arbitrary $c \in \cC$, and correctly identifies $c$, must satisfy $n = \Omega(\frac{d}{\alpha^2 \eps})$.  Therefore, if we can estimate the mean to within $\ell_2^2$ error $< \alpha^2/64$, we can identify $c$ uniquely.  Moreover, if $\cA$ satisfies
$$
\ex{X_1,\dots,X_n \sim P, \cA}{\| \cA(X) - \ex{}{P} \|_2^2} < \alpha^2/192
$$
for every distribution $P$ with bounded $2nd$ moment, then by Markov's inequality, we have
$$
\pr{X_1,\dots,X_n \sim P, \cA}{ \| \cA(X) - \ex{}{P} \|_2^2 < \alpha^2/64} \geq 2/3.
$$
Therefore, any $(\eps,0)$-DP algorithm $\cA$ with low expected $\ell_2^2$ error must have $n = \Omega(\frac{d}{\alpha^2 \eps})$.  The theorem now follows by a change-of-variables for $\alpha$.
\end{proof}

\section*{Acknowledgments}
We would like to thank John Duchi for bringing~\cite{BarberD14} to our attention, and Argyris Mouzakis for pointing out issues with a previous version of Lemma~\ref{lem:high-d-trunc-mean}.

\noindent GK is supported by a University of Waterloo startup grant. Part of this work was done while GK was supported as a Microsoft Research Fellow, as part of the Simons-Berkeley Research Fellowship program at the Simons Institute for the Theory of Computing, and while visiting Microsoft Research Redmond.
VS and JU are supported by NSF grants CCF-1718088, CCF-1750640, CNS-1816028, and CNS-1916020.
This work was initiated while all authors were visiting the Simons Institute for the Theory of Computing.

\addcontentsline{toc}{section}{References}
\bibliographystyle{alpha}
\bibliography{biblio}

\appendix

\section{Useful Inequalities}

The following standard concentration inequalities are used frequently in this document.

\begin{lem}[Chebyshev's Inequality]\label{lem:chebyshev}
    Let $\cD$ be a distribution over $\R$ with mean
    $\mu$, and $k^{\text{th}}$ moment bounded by $M$.
    Then the following holds for any $a > 1$.
    $$\pr{X \sim \cD}{\abs{X-\mu} > aM^{\frac{1}{k}}}
        \leq \frac{1}{a^k}$$
\end{lem}

\begin{lem}[Concentration in High Dimensions \cite{ZhuJS19}]
    \label{lem:chebyshev-high-d}
    Let $\cD$ be a distribution over $\R^d$ with mean
    $\vec{0}$, and $k^{\text{th}}$ moment bounded by $M$.
    Then the following holds for any $t > 0$.
    $$\pr{X \sim \cD}{\llnorm{X} > t}
        \leq M\left(\frac{\sqrt{d}}{t}\right)^k$$
\end{lem}

\begin{lemma}[Multiplicative Chernoff]\label{lem:chernoff-mult}
    Let $X_1,\dots,X_m$ be independent Bernoulli random variables
    taking values in $\zo$. Let $X$ denote their sum and
    let $p = \ex{}{X_i}$. Then for $m \geq \frac{12}{p}\ln(2/\beta)$,
    $$\pr{}{X \not\in \left[ \frac{mp}{2}, \frac{3mp}{2} \right]} \leq
        2e^{-mp/12} \leq \beta.$$
\end{lemma}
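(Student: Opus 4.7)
The plan is to prove this as a standard multiplicative Chernoff bound via the moment-generating-function method, handling the upper and lower tails separately and then combining via a union bound, followed by a direct calculation that plugs in the hypothesis on $m$ to extract the final $\beta$ bound.

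First I would bound the MGF of each summand. Since each $X_i \in \{0,1\}$ with mean $p$, we have for any $t \in \R$ that $\ex{}{e^{tX_i}} = 1 - p + p e^{t} \leq \exp(p(e^t - 1))$, using $1 + x \leq e^x$. By independence, $\ex{}{e^{tX}} \leq \exp(mp(e^t - 1))$. Applying Markov's inequality to $e^{tX}$ and optimizing the resulting bound in $t$ yields, for $\delta \in (0,1]$,
\[
    \pr{}{X \geq (1+\delta) mp} \leq \left(\frac{e^{\delta}}{(1+\delta)^{1+\delta}}\right)^{mp} \leq \exp\!\left(-\frac{\delta^2 mp}{3}\right),
\]
where the last inequality is the standard estimate obtained by Taylor-expanding $(1+\delta)\ln(1+\delta)$.

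Next I would handle the lower tail analogously, applying Markov's inequality to $e^{-tX}$ with $t > 0$ and again using $\ex{}{e^{-tX_i}} \leq \exp(p(e^{-t} - 1))$. Optimizing gives, for $\delta \in (0,1)$,
\[
    \pr{}{X \leq (1-\delta) mp} \leq \exp\!\left(-\frac{\delta^2 mp}{2}\right).
\]
Instantiating both bounds at $\delta = 1/2$ and taking the looser of the two exponents yields $\pr{}{X \geq 3mp/2} \leq e^{-mp/12}$ and $\pr{}{X \leq mp/2} \leq e^{-mp/8} \leq e^{-mp/12}$. A union bound then gives $\pr{}{X \notin [mp/2,\,3mp/2]} \leq 2e^{-mp/12}$.

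Finally, substituting the hypothesis $m \geq \frac{12}{p}\ln(2/\beta)$ gives $mp/12 \geq \ln(2/\beta)$, hence $e^{-mp/12} \leq \beta/2$ and $2e^{-mp/12} \leq \beta$, as required. There is no real obstacle here; the only care needed is to verify that the elementary inequality $(1+\delta)\ln(1+\delta) - \delta \geq \delta^2/3$ for $\delta \in (0,1]$ (and its counterpart for the lower tail) is tight enough to absorb the constant $12$ appearing in the statement, which it is.
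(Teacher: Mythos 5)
Your proof is correct and complete: the MGF/Chernoff argument with $\delta = 1/2$ yields exactly the claimed $2e^{-mp/12}$ bound, and the final substitution of $m \geq \frac{12}{p}\ln(2/\beta)$ is right. The paper states this lemma without proof as a standard concentration inequality, and your argument is precisely the standard one it implicitly relies on, so there is nothing to compare beyond noting that your derivation fills in the omitted details correctly.
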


\begin{lemma}[Bernstein's Inequality]\label{lem:chernoff-add}
    Let $X_1,\dots,X_m$ be independent Bernoulli random variables
    taking values in $\zo$. Let $p = \ex{}{X_i}$.
    Then for $m \geq \frac{5p}{2\epsilon^2}\ln(2/\beta)$ and
    $\eps \leq p/4$,
    $$\pr{}{\abs{\frac{1}{m}\sum{X_i}-p} \geq \epsilon}
        \leq 2e^{-\epsilon^2m/2(p+\epsilon)}
        \leq \beta.$$
\end{lemma}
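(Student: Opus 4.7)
The plan is to prove Bernstein's inequality for Bernoulli random variables using the standard Cram\'er--Chernoff method, which controls moment generating functions (MGFs) of centered variables and then optimizes the Markov parameter. Let $Y_i = X_i - p$, so that $\mathbb{E}[Y_i] = 0$, $|Y_i| \leq 1$, and $\mathrm{Var}(Y_i) = p(1-p) \leq p$. I would then carry out the argument in four steps and finally verify the $\leq \beta$ conclusion from the hypothesis on $m$.

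First, I would bound the MGF of a single centered Bernoulli. The key computation is
\[
\mathbb{E}\bigl[e^{\lambda Y_i}\bigr]
= e^{-\lambda p}\bigl(1 + p(e^{\lambda}-1)\bigr)
\leq e^{-\lambda p}\,e^{p(e^{\lambda}-1)}
= e^{p(e^{\lambda}-1-\lambda)},
\]
using $1+x \leq e^{x}$. Second, by independence,
\[
\mathbb{E}\bigl[e^{\lambda \sum_i Y_i}\bigr]
\leq e^{mp(e^{\lambda}-1-\lambda)}.
\]
Third, I would apply Markov's inequality: for $\lambda>0$,
\[
\Pr\Bigl[\tfrac{1}{m}\textstyle\sum X_i - p \geq \epsilon\Bigr]
\leq e^{-\lambda m \epsilon}\, e^{mp(e^{\lambda}-1-\lambda)}.
\]
The analogous bound on the lower tail follows from the same argument applied to $-Y_i$; combining gives a factor of $2$ and the two-sided deviation bound.

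Fourth, I would optimize $\lambda$. Using the standard inequality $e^{\lambda}-1-\lambda \leq \frac{\lambda^{2}}{2(1-\lambda/3)}$ for $0 < \lambda < 3$, the right-hand side becomes $\exp\bigl(-\lambda m\epsilon + mp\lambda^{2}/(2(1-\lambda/3))\bigr)$. The quasi-optimal choice $\lambda = \epsilon/(p+\epsilon/3)$ yields an exponent of at most $-\frac{m\epsilon^{2}}{2(p+\epsilon/3)}$, which in particular is bounded above by $-\frac{m\epsilon^{2}}{2(p+\epsilon)}$, proving the first claimed inequality $2e^{-\epsilon^{2} m/2(p+\epsilon)}$. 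Finally, to conclude the $\leq \beta$ bound, I would substitute the hypothesis $m \geq \frac{5p}{2\epsilon^{2}}\ln(2/\beta)$ and use $\epsilon \leq p/4$, so $p+\epsilon \leq 5p/4$, giving $\frac{m\epsilon^{2}}{2(p+\epsilon)} \geq \frac{m\epsilon^{2}}{5p/2} \geq \ln(2/\beta)$, hence $2e^{-\epsilon^{2} m/2(p+\epsilon)} \leq \beta$.

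The only step requiring any care is the fourth one: tracking constants in the $\lambda$-optimization is what determines whether the denominator in the exponent comes out as $2(p+\epsilon)$ rather than some sharper but less convenient quantity. Since the statement uses the slightly loose form $2(p+\epsilon)$ (as opposed to $2(p+\epsilon/3)$ from the tightest Bernstein bound), this step is routine and the verification against the sample size hypothesis at the end is just an algebraic substitution using the assumption $\epsilon \leq p/4$. No deeper obstacle arises.
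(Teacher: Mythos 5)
The paper states this lemma in its appendix of standard concentration inequalities without proof, so there is no in-paper argument to compare against; your Cram\'er--Chernoff derivation is the canonical proof and is correct. All steps check out: the MGF bound $e^{p(e^{\lambda}-1-\lambda)}$, the inequality $e^{\lambda}-1-\lambda \leq \lambda^{2}/(2(1-\lambda/3))$, the choice $\lambda = \epsilon/(p+\epsilon/3)$ yielding exponent $-m\epsilon^{2}/(2(p+\epsilon/3)) \leq -m\epsilon^{2}/(2(p+\epsilon))$, and the final substitution using $p+\epsilon \leq 5p/4$ to get $\leq \beta$.
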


\begin{lem}[Laplace Concentration]\label{lem:lap-conc}
    Let $Z \sim \Lap(t)$. Then
    $\pr{}{\abs{Z} > t\cdot\ln(1/\beta)} \leq \beta$.
\end{lem}

\begin{lem}[Gaussian Empirical Variance Concentration]
    \label{lem:gauss-conc}
    Let $(X_1,\dots,X_m) \sim \cN(0,\sigma^2)$ be
    independent. If $m \geq \tfrac{8}{\tau^2}\ln(2/\beta)$,
    for $\tau\in(0,1)$, then
    $$\pr{}{\abs{\frac{1}{m}\sum\limits_{i=1}^{m}{X_i^2}-\sigma^2}
        > \tau\sigma^2} \leq \beta.$$
\end{lem}

\noindent We also mention two well-known and useful inequalities.

\begin{lem}[H\"{o}lder's Inequality]\label{lem:holder}
    Let $X,Y$ be random variables over $\R$, and
    let $k>1$. Then,
    $$\ex{}{\abs{XY}} \leq \left(\ex{}{\abs{X}^{k}}\right)^{\frac{1}{k}}
        \left(\ex{}{\abs{Y}^{\frac{k}{k-1}}}\right)^{\frac{k-1}{k}}.$$
\end{lem}

\begin{lem}[Jensen's Inequality]\label{lem:jensen}
    Let $X$ be an integrable, real-valued random
    variable, and $\psi$ be a convex function. Then
    $$\psi\left(\ex{}{X}\right) \leq \ex{}{\psi(X)}.$$
\end{lem}

\end{document}